\documentclass[journal]{IEEEtran}
\usepackage{graphicx}
\usepackage{amsmath}
\usepackage[noend]{algpseudocode}
\usepackage{algorithmicx,algorithm}
\usepackage{amsthm}
\usepackage{amsfonts}
\usepackage{subfigure}
\usepackage{color}
\usepackage{cite}
\usepackage{setspace}
\newtheorem{Lemma}{Lemma}
\newtheorem{Theorem}{Theorem}

\newtheorem{Definition}{Definition}
\newtheorem{Proposition}{Proposition}
\usepackage{amssymb}
\usepackage{color}
\graphicspath{{images/}}
\usepackage{booktabs}
\usepackage{multirow}
\usepackage{makecell}
\usepackage[numbers,sort&compress]{natbib}
\newtheorem{remark}{Remark}
\usepackage{stfloats}
\usepackage{amsmath}
\usepackage{booktabs}
\usepackage{amssymb}
\usepackage{enumitem}
\usepackage{threeparttable}
\usepackage{comment}
\usepackage{bm}
\begin{document}
	
	\title{Digital Tides: A Fluid-Dynamic Framework for Flux-Aware Infrastructure Provisioning in UAV Logistics Networks}
	
\author{Wen-Yu Dong, Song Zhao, Rui-Si Han, Qi Bi,~\IEEEmembership{Fellow,~IEEE}, Sheng Chen,~\IEEEmembership{Life Fellow,~IEEE}	%

	\thanks{W.-Y. Dong, S. Zhao and Q. Bi are with Future Technology Research Center, China Telecom Research Institute, Beijing 102209, China (E-mails: dongwy@chinatelecom.cn; zhaosong1@chinatelecom.cn; qibi@chinatelecom.cn)} %
	\thanks{R.-S. Han is with Cloud Network Operating System R\&D Center, China Telecom, Beijing 102209, China (E-mail: hanruisi@chinatelecom.cn)} %
	\thanks{S. Chen is with the School of Electronics and Computer Science, University of Southampton, Southampton SO17 1BJ, U.K. (E-mail: sqc@ecs.soton.ac.uk).} %
		\thanks{\scriptsize Digital Object Identifier: 10.1109/TMC.2026.3688690. 
		\copyright~2026 IEEE. Personal use of this material is permitted.
		Permission from IEEE must be obtained for all other uses, in any current or future media,
		including reprinting/republishing this material for advertising or promotional purposes,
		creating new collective works, for resale or redistribution to servers or lists,
		or reuse of any copyrighted component of this work in other works.
	}
	
\vspace*{-8mm}
}

\maketitle 
\begin{abstract}
	The emergence of high-frequency pulsating logistics unmanned aerial vehicle (UAV) swarms gives rise to ``Digital Tides'', i.e., complex traffic dynamics that challenge sustainable resource provisioning in mobile computing networks. Conventional infrastructure provisioning strategies, which typically rely on static snapshot-based analysis and localized density estimation, fail to capture the macroscopic advection of computational workloads. As a result, reactive resource activation suffers from inherent hysteresis, yielding nominal efficiency gains at the cost of mission-critical service loss at the advancing wavefront. To address this issue, we develop a fluid-based spatiotemporal framework by explicitly solving the continuity equation to characterize the macroscopic velocity field of the workload flow. Building on this framework, we propose a flux-aware asymmetric activation strategy that leverages the derived information flux vector as a kinematic precursor of demand propagation. Unlike symmetric thresholding, the proposed control logic decouples activation and deactivation dynamics. Theoretical analysis confirms the intrinsic spatial phase-lead of the flux signal and shows that the proposed strategy generates a proactive guard ring to compensate for service setup latency, including delays caused by mobile edge computing container cold-starts. We further derive closed-form expressions for instantaneous service availability and period-average energy efficiency. In addition, we formulate a quality-of-service-penalized metric to evaluate effective energy efficiency under strict outage constraints. Numerical results show that the proposed flux-driven strategy enables zero-latency tracking of the mobile wavefront and achieves a Pareto-optimal trade-off between service reliability and energy consumption, outperforming reactive baselines in dynamic logistics corridors.
\end{abstract}

\begin{IEEEkeywords}
	Digital Tides, UAV logistics, fluid-dynamic modeling, infrastructure provisioning, information flux.
\end{IEEEkeywords}
	
\section{Introduction}\label{sec:intro} 

\IEEEPARstart{T}{he} rapid proliferation of logistics unmanned aerial vehicles (UAVs) for food delivery, medical transport, etc. is reshaping the low-altitude economy, establishing aerial mobile computing as a cornerstone of future autonomous systems \cite{Azari2022, Geraci2022, DongTCOM, DongJSAC}. Unlike terrestrial users with stochastic mobility, logistics UAVs exhibit highly structured macroscopic flow patterns driven by on-demand delivery cycles. This creates a phenomenon we term the low-altitude digital tide, a pulsating traffic pattern where swarms periodically disperse from distribution centers and return to base. This tidal traffic pattern involves two critical kinematic phases: the expansion phase where the spatial footprint of the swarm grows rapidly, and the contraction phase where the swarm recedes. Supporting this complex tidal traffic poses a significant energy challenge. Ground infrastructures, such as mobile edge computing (MEC)-enabled base stations (BSs), typically employ always-on service interfaces to guarantee availability. Given that infrastructure provisioning accounts for over 80\% of network energy consumption \cite{Gang2015}, maintaining idle computing resources during the tide's ebb periods fundamentally contradicts the sustainability goals of mobile computing systems.

Simply applying conventional energy-saving strategies to this scenario creates a critical reliability conflict. Existing methods, often rooted in static snapshot-based analysis \cite{IoTJDong, DongGC, DongGC2}, predominantly rely on detecting local workload density. While effective for stationary hotspots, this static perspective is inherently ill-suited for the application scenario considered. The core failure mechanism lies in the mismatch between the resource provisioning speed and the workload advection speed. Specifically, as a UAV swarm propagates outward, it forms a rapidly moving demand wavefront. However, due to the inevitable service setup latency specified by the necessary transition time from sleep to active mode, including MEC container cold-starts, a node triggered solely by local density will activate too late, often after the leading UAVs have already entered its service area. This hysteresis results in a wavefront outage, creating a moving service void that compromises the reliability of mission-critical logistics operations.

While data-driven prediction methods attempt to mitigate this lag, they often incur prohibitive computational overheads for training and lack the generalizability to handle emergent swarm behaviors. To overcome this latency-induced failure, resource orchestration must shift from observing static snapshots to predicting dynamic workload flows. This paper introduces a fluid-dynamic provisioning framework for sustainable mobile computing infrastructure. Instead of restricting the analysis to independent static realizations common in traditional stochastic modeling, we model the logistics swarm as a continuous compressible fluid governed by conservation laws \cite{sergiou2013estimating, skehill2007application, chiasserini2007fluid}. This allows us to derive an information-flux vector field that quantifies the momentum of computational demand. Moreover, unlike the scalar density, which peaks only upon arrival, the flux vector naturally points in the direction of flow and reaches its threshold ahead of the density peak. By exploiting this intrinsic kinematic phase lead, our strategy effectively generates a time buffer, waking up infrastructure resources just-in-time to reliably serve the incoming wavefront.

\subsection{Related Works}\label{sec:related_works} 

Research on energy efficiency (EE) in non-terrestrial networks (NTNs) primarily focuses on two optimization domains: aerial agents and terrestrial infrastructure. 
The majority of existing literature prioritizes UAV endurance, extensively exploring joint trajectory design and power control to minimize propulsion and operational energy consumption \cite{zhou2022uav, Zhang2021}. 
Conversely, studies focusing on terrestrial networks typically exploit UAVs as aerial edge nodes to offload workloads, thereby enabling ground BSs (GBSs) to enter sleep modes \cite{li2025uav, Liu2023}. 
However, the reciprocal challenge—optimizing the energy consumption of ground infrastructure dedicated to serving high-density aerial logistics—has received limited attention. 
Although early works such as \cite{chowdary2021joint} investigated GBS sleeping for UAV command and control, they relied on static access techniques like coordinated multi-point and non-orthogonal multiple access. 
These approaches treat aerial workloads as stationary spatial distributions, failing to capture the macroscopic advection inherent in logistics tides. 
Consequently, they lack the predictive capability necessary to compensate for service setup latency.

Adapting dynamic resource-scaling strategies from terrestrial mobile networks offers a potential solution. These strategies are generally categorized into robust optimization and learning-based prediction. Robust optimization frameworks, such as chance-constrained programming \cite{teng2021joint}, address workload uncertainty via statistical moments.  While robust against random fluctuations in stationary terrestrial clients, these methods typically adopt a discrete time-slotted perspective, which lacks the temporal resolution required to track a rapidly propagating demand wavefront.  This leads to inherent hysteresis-induced outages. Similarly, data-driven paradigms, including long short-term memory (LSTM) networks \cite{wang2024base, jang2020base, azari2022energy} and deep reinforcement learning (DRL) \cite{wu2021deep}, depend on continuous workload observability, a condition unavailable to sleeping nodes.  Furthermore, these models often lack the physical generalizability required to handle emergent swarm behaviors on unseen logistics routes \cite{wu2021deep}.

The ineffectiveness of adapting the aforementioned terrestrial strategies to the logistics UAV swarm context is fundamentally rooted in their mathematical modeling of system workload. 
Standard stochastic geometry analysis \cite{Baccelli2009, Peng2013, Shabbir2024,Zhaoetal2024}, widely adopted to derive EE limits in mobile networks, relies heavily on the snapshot assumption, treating network evolution as a sequence of independent static realizations. 
Although advances in spatio-temporal stochastic geometry have begun to address temporal interference correlations \cite{Win2009, Haenggi2013, Lu2021}, these frameworks predominantly focus on stationary users or random waypoint mobility. 
They inherently lack the capability to capture the structured macroscopic advection characteristic of logistics tides. 
Consequently, such static modeling frameworks fail to provide the kinematic foresight needed to preemptively overcome service setup latency.

To address the limitations of discrete agent modeling, fluid dynamic models were proposed to capture macroscopic mobility patterns \cite{sergiou2013estimating, skehill2007application, chiasserini2007fluid}.  Similarly, mean field games were employed to mathematically formalize the interactions of massive UAV agents as a continuum \cite{Shiri2020, Chen2021, Wang2024}.  However, these works primarily focus on distributed trajectory optimization or power control, leaving the coupled dynamics between macroscopic workload flow and ground resource activation unexplored. Prior studies \cite{sergiou2013estimating, skehill2007application} utilized the fluid principles for capacity estimation or trajectory synthesis. These contributions, however, remain largely descriptive rather than prescriptive. A critical gap persists in mapping continuous kinematic metrics, such as flow velocity and flux, to the discrete switching dynamics of resource activation. Mitigating service setup latency \cite{Onireti2017} in high-mobility corridors remains an open engineering challenge. This limitation leads to a deceptive reliability-efficiency trade-off, where reactive strategies achieve misleadingly high nominal EE by systematically shedding the energy-intensive demand wavefront. To bridge this gap, we propose a fluid-based spatiotemporal framework that leverages information flux as a kinematic precursor. Unlike reactive strategies limited by snapshot-based inputs, our approach proactively compensates for service setup latency by aligning network activation with the macroscopic momentum of the digital tide.

\subsection{Our Contributions}\label{S1.2}

Our main contributions are summarized as follows.
\begin{itemize}
	\item We establish a fluid-dynamic traffic model governed by conservation laws to characterize the pulsating dynamics of logistics UAV swarms. By explicitly solving the continuity equation, we derive the closed-form macroscopic velocity field. This continuous formulation overcomes the static limitations of classical stochastic modeling, establishing a fluid-based spatiotemporal framework that captures the cumulative spatiotemporal workload volume for evaluating non-stationary infrastructure sustainability.
	
	\item We propose a flux-aware asymmetric activation strategy that bridges the gap between macroscopic flow dynamics and discrete resource states. Identifying the information flux vector as a kinematic precursor for demand momentum, the proposed control logic exploits the intrinsic spatial phase-lead at the wavefront. This mechanism generates a proactive guard ring that effectively compensates for service setup latency, thereby enabling zero-latency tracking of the expanding demand wavefront.
	
	\item We develop a tractable analytical framework to quantify system performance, deriving closed-form expressions for the instantaneous service availability and period-average EE. To resolve the deceptive reliability-efficiency trade-off suffered by existing strategies, we formulate a quality of service (QoS)-penalized effective EE metric. Guided by this metric, we demonstrate that the proposed strategy can be calibrated to identify the Pareto-optimal operating point, effectively decoupling the conflict between energy conservation and service reliability.
\end{itemize}
From a broader perspective, the fluid-spatiotemporal viewpoint underlying the Digital Tides framework was subsequently formalized into the fluid-spatiotemporal stochastic geometry (F-STSG) framework for modeling information flow in non-stationary fields~\cite{Dong2026FSTSG}. This continuum-field perspective was further extended to wireless routing through the vorticity dissipation based routing (VDR) framework, which exploits flow decomposition and vorticity dissipation for loop-free transport in ultra-dense networks~\cite{Dong2026VDR}.

This paper is organized as follows. Section \ref{sec:background} introduces the fundamental concepts of digital tides and motivates the adoption of the fluid-dynamic paradigm. Section \ref{sec:system_model} establishes the system model, formulates the low-altitude digital tide using fluid dynamics and introduces the hardware constraints. Section \ref{sec:system_strategy} details the flux-aware asymmetric activation strategy. Section \ref{sec:performance_analysis} provides the theoretical performance analysis, deriving the resource activation geometry, service availability, and EE, alongside the proof of the flux phase-lead property. Section \ref{sec:numerical_results} presents numerical results, validating the analytical models and quantifying the performance gains regarding outage elimination and Pareto optimality. Section \ref{sec:discussion} discusses the practical limitations and the generalizability of the proposed framework under non-ideal conditions. Finally, Section \ref{sec:conclusion} concludes the paper.

Throughout this paper, $\mathbb{P}(\cdot)$ and $\mathbb{E}[\cdot]$ denote the probability and expectation operators, respectively. $\mathbb{I}(\cdot)$ denotes the indicator function, which equals 1 if the condition inside is true, and 0 otherwise. $\mathcal{L}_{I}(s)=\mathbb{E}[e^{-sI}]$ is the Laplace transform of the interference $I$.  The probability density function (PDF) of random variable $X$ is denoted by $f_x(x)$. ${}_2F_1(a, b; c; z) = \sum_{n=0}^{\infty} \frac{(a)_n \, (b)_n}{(c)_n \, n!} \, z^n$ is the standard mathematical notation for the Gaussian hypergeometric function, and $[\cdot]^+ \triangleq \max\{0, \cdot\}$. For easy reference, Table~\ref{Table-Notation} summarizes the key symbols used in this paper. 

\begin{table}[!t]
	\vspace*{-2mm}
	\scriptsize
	\caption{Summary of Key Mathematical Notations}
	\label{Table-Notation} 
	\centering
	\renewcommand{\arraystretch}{1.2}
	\vspace*{-2mm}
	\begin{tabular}{ll} 
		\toprule
		\textbf{Symbol} & \textbf{Description} \\
		\midrule
		$\Phi_{\textrm{BS}}, \lambda_{\textrm{BS}}$ & PPP and density of GBSs \\
		$H_{\textrm{UAV}}$ & Operational altitude of logistics UAVs \\
		$P_{\textrm{act}}, P_{\textrm{slp}}$ & Active and sleep power consumption of the LAS module \\
		$\nu, \gamma$ & Path loss exponent and SINR threshold \\
		$B, \eta_{\rm SE}$ & Bandwidth and target spectral efficiency \\
		$\lambda(r, t)$ & Spatio-temporal workload density at distance $r$ and time $t$ \\
		$N(t)$ & Total workload demand \\
		$\sigma(t), \dot{\sigma}(t)$ & Spatial spread parameter and its expansion rate \\
		$N_0, \sigma_0$ & Baseline workload intensity and spatial spread \\
		$\delta_N, \delta_{\sigma}$ & Load modulation index and spatial expansion index \\
		$v_r(r,t)$ & Macroscopic radial expansion velocity \\
		$\alpha(\mathbf{x},t)$ & Binary activation state of a GBS at location $\mathbf{x}$ \\
		$\mathbf{\Phi}(\mathbf{x},t)$ & Information flux vector quantifying demand momentum \\
		$R_{\textrm{wf}}(t)$ & Logistic wavefront position \\
		$\lambda_{\textrm{act}}, \lambda_{\textrm{hold}}$ & Static activation and holding density thresholds \\
		$\delta_{\textrm{th}}$ & Flux sensitivity threshold  \\
		$\tau_{\textrm{boot}}$ & Service setup latency \\
		$R_{\textrm{active}}(t)$ & Effective activation radius of the network \\
		$\mathcal{G}_{\textrm{flux}}$ & Spatial flux gain  \\
		$\mathcal{T}_{\textrm{total}}, E_{\textrm{total}}$ & Total served workload volume and energy consumption \\
		$\eta_{\textrm{EE}}, \eta_{\textrm{eff}}$ & Raw energy efficiency and effective energy efficiency \\
		\bottomrule
	\end{tabular}
	\vspace*{-3mm}
\end{table}

\section{Background and Motivation}\label{sec:background} 

To establish the physical intuition behind the proposed methodology, this section introduces the fundamental concepts of digital tides and clarifies the motivation for adopting a fluid-dynamic perspective.

\subsection{From Stochastic Mobility to Digital Tides}\label{S2.1}

Traditional mobile networks are primarily designed to accommodate independent users with stochastic mobility patterns, such as random waypoint or Brownian motion. Under such microscopic randomness, the aggregate spatial distribution of users remains statistically stationary over short time scales. 

However, the deployment of logistics UAV swarms introduces a distinct macroscopic dynamic. Governed by centralized scheduling and delivery deadlines, these swarms exhibit structured macroscopic flow patterns rather than isolated random walks. When a logistics hub dispatches a wave of delivery UAVs, it creates a pulsating traffic surge propagating outward, which we conceptualize as a digital tide. Serving this tide requires the ground infrastructure to activate and deactivate synchronously with the swarm propagation.

\subsection{Limitations of Reactive Density Detection}\label{S2.2}

Evaluating dynamic infrastructure activation requires an accurate characterization of the spatial workload. Conventional resource provisioning strategies, often modeled via static stochastic geometry, predominantly rely on localized density detection. This approach triggers BSs by independently observing the instantaneous spatial accumulation of users.

While effective for stationary hotspots, the static density metric lacks predictive capability regarding macroscopic mobility. Specifically, a BS relying on local density detection initiates its activation sequence only after the swarm enters its coverage area. Given the service setup latency associated with hardware wake-up and MEC container cold-starts, this reactive detection creates a temporal lag. Consequently, the leading edge of the swarm, namely, the mission-critical wavefront, experiences significant service outages. 

\subsection{The Fluid-Dynamic Paradigm and Information Flux}\label{S2.3}

To overcome the hysteresis in reactive control, the infrastructure must anticipate the arrival of the demand. This requires a shift in mathematical abstraction from discrete static agents to a continuous dynamic flow. By modeling the dense UAV swarm as a compressible fluid governed by the continuity equation, we analytically define the macroscopic velocity field of the traffic.

Building upon this fluid-dynamic foundation, we introduce the concept of information flux. Defined as the product of the scalar spatial density and the macroscopic velocity vector, the flux encapsulates the momentum of the service demand. The physical motivation for utilizing this metric lies in its spatial phase-lead property. At the expanding wavefront, the macroscopic velocity increases the flux signal, allowing it to reach detection thresholds at greater distances compared to the static density. By exploiting this kinematic property, the network generates a proactive guard ring, reserving the required time buffer to execute MEC cold-starts before the physical workload arrives.

\section{System Model and Problem Formulation}\label{sec:system_model} 

Fig. \ref{fig:system_model} illustrates an MEC-enabled urban logistics network driven by a pulsating logistics tide, which is conceptualized as a macroscopic spatiotemporal field of traffic density that undergoes periodic, radial expansion and contraction.

\subsection{Network Architecture}\label{sec:system_model_A} 

\subsubsection{Logistics Hub and UAV Kinematics}
The system serves a centralized logistics area $\mathcal{A} \subset \mathbb{R}^2$, targeting scenarios such as food delivery or medical transport. A logistics hub is located at the origin of $\mathcal{A}$, serving as the source and sink of the UAV swarm. UAVs operate at a constant altitude $H_{\textrm{UAV}}$. Unlike stochastic mobility models, such as random way-point, the aggregate swarm kinematics exhibit a deterministic macroscopic structure: a radial velocity field driven by the periodic waves of delivery demand.

\subsubsection{Ground Infrastructure and Hardware Constraints}
The ground network comprises cellular BSs deployed according to a homogeneous Poisson point process (PPP) $\Phi_{\textrm{BS}}$ with density $\lambda_{\textrm{BS}}$. To balance ubiquitous coverage reliability with high-capacity throughput requirements, we adopt a dual-layer heterogeneous architecture consisting of: 1) the terrestrial sector layer, where existing down-tilted sectors remain always-active to provide basic command and control links via antenna sidelobes; and 2) the low-altitude sector (LAS) layer, where BSs are equipped with dedicated up-tilted modules utilizing orthogonal frequency resources for high-bandwidth computational task offloading.

In this framework, we adopt an MEC architecture where each GBS is equipped with an edge server to handle latency-sensitive logistics tasks, such as visual navigation and path planning. Given that the radio unit and the edge computing module share a power supply and control interface, their activation is synchronized. Accordingly, the service setup latency $\tau_{\textrm{boot}}$ is modeled as the aggregate of the radio frequency circuit wake-up time and the MEC container cold-start delay, which includes the time needed to instantiate virtual machines and load service images.

This work focuses on the energy-efficient activation control of the power-consuming LAS layer. Consistent with analytical models for infrastructure sleep depth \cite{Onireti2017}, we incorporate the activation latency $\tau_{\textrm{boot}}$ to account for the non-instantaneous transition from sleep to active mode. This imposes a timing  constraint on the network control: to guarantee service availability at time $t$, the activation decision $\alpha(\mathbf{x}, t)$ must be asserted at or before $t-\tau_{\textrm{boot}}$. The specific design of the flux-aware control logic satisfying this constraint will be detailed in Section \ref{sec:system_strategy}.

\begin{figure}[!t]
	\centering
	\vspace*{-2mm}
	\includegraphics[width=\columnwidth]{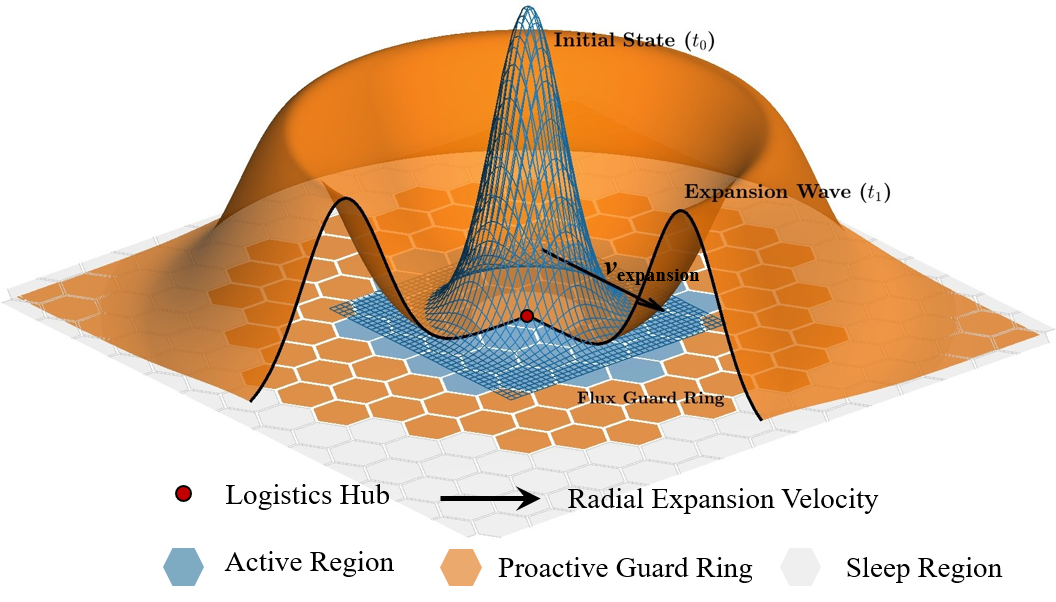} 
	\vspace*{-8mm}
	\caption{System model of the MEC-enabled UAV logistics network.}
	\vspace*{-4mm}
	\label{fig:system_model} 
\end{figure}

\vspace*{-2mm}
\subsection{Fluid-Dynamic Workload Modeling}\label{sec:system_tide} 

To capture the macroscopic mobility of the UAV swarm, we model the traffic as a compressible fluid driven by logistics demand.

\subsubsection{Density Field Formulation}

We model the instantaneous mobile agent density $\lambda(r,t)$ at distance $r$ and time $t$ as a time-varying, non-separable Gaussian field:
\begin{equation} 
	\lambda(r, t) = \frac{N(t)}{2\pi \sigma^2(t)} \exp\left( - \frac{r^2}{2\sigma^2(t)} \right).
	\label{eq:gaussian_intensity}
\end{equation}
To explicitly capture the periodicity, we model the load $N(t)$ and spread $\sigma(t)$ as sinusoidal functions with period $T_{\textrm{period}}$:
\begin{align} 
	N(t) &= N_0 \left( 1 + \delta_N \cos(\omega t) \right), \label{eq:Nt_def} \\
	\sigma(t) &= \sigma_0 \left( 1 + \delta_\sigma \cos(\omega t + \varphi) \right), \label{eq:sigmat_def}
\end{align}
where $N_0$ and $\sigma_0$ are the baseline workload intensity and spatial spread, respectively, $\delta_N$ and $\delta_\sigma$ are the workload modulation index and spatial expansion index, while $\omega\! =\! 2\pi/T_{\textrm{period}}$ is the angular frequency, and $\varphi$ accounts for the phase synchronization offset between traffic generation and spatial expansion.

\subsubsection{Macroscopic Velocity and Information Flux}

Governed by the law of conservation of mass, the evolution of the density field $\lambda(\mathbf{x}, t)$ at position $\mathbf{x}\! \in\! \mathbb{R}^2$ and time $t$ adheres to the continuity equation:
\begin{equation}\label{eq:continuity} 
	\frac{\partial \lambda}{\partial t} + \nabla \cdot (\lambda \mathbf{v}) = 0,
\end{equation}
where $\mathbf{v}(\mathbf{x}, t)$ is the macroscopic velocity field. By solving this equation (see Section S-I of the Supplementary File), we derive the radial expansion velocity of the logistics tide as $v_r(r,t) \approx r \frac{\dot{\sigma}(t)}{\sigma(t)}$, where $r\! =\! \|\mathbf{x}\|$ is the radial distance from the logistics hub.

\begin{remark}
The derivation in Section S-I reveals that the macroscopic flow is governed by two orthogonal forces: 1) source dynamics driven by load variation $\dot{N}$, representing a ``fictitious velocity'' required to accommodate node creation or removal; and 2) transport dynamics driven by spatial reshaping $\dot{\sigma}$. In high-mobility logistics scenarios, the kinematic transport dominates the source variation, i.e., $|\dot{\sigma}/\sigma|\! \gg\! |\dot{N}/N|$. This dominance ensures that the flow physically manifests as a uniform expansion rather than a localized fluctuation, thereby validating the use of the linear velocity model $v_r \propto r$ for proactive control.
\end{remark}

Based on this kinematic description, we introduce the information flux to quantify the momentum of service demand.

\begin{Definition}
\label{def:flux} 
The information flux, denoted by $\mathbf{\Phi}(\mathbf{x}, t)$, is a vector field representing the rate of traffic demand flow passing through a unit spatial cross-section. Mathematically, it is defined as the product of the scalar spatio-temporal user density $\lambda(\mathbf{x},t)$ and the macroscopic velocity vector $\mathbf{v}(\mathbf{x},t)$:
\begin{equation} 
	\mathbf{\Phi}(\mathbf{x},t) \triangleq \lambda(\mathbf{x},t) \mathbf{v}(\mathbf{x},t) \approx \lambda(r,t) \left( r \frac{\dot{\sigma}(t)}{\sigma(t)} \right) \mathbf{e}_r ,
	\label{eq:flux_def}
\end{equation}
where $\mathbf{e}_r\! =\! \mathbf{x}/\|\mathbf{x}\|$ is the unit radial vector pointing outwards from the origin.
\end{Definition}

This vector formulation offers a critical physical insight that cannot be obtained from scalar density alone. While $\lambda(\mathbf{x},t)$ merely measures the static accumulation of load, $\mathbf{\Phi}(\mathbf{x},t)$ quantifies the dynamic trend of the demand momentum. The magnitude $\|\mathbf{\Phi}\|$ is amplified by the expansion velocity $v_r$, which scales linearly with distance $r$. At the expanding wavefront where $r$ is large, the flux intensity reaches its detection threshold before the workload density accumulates to the activation level. This kinematic property creates a spatial phase lead, serving as a predictive indicator for proactive activation.

\subsubsection{Logistic Wavefront}

We define the logistic wavefront, denoted as $R_{\textrm{wf}}(t)$, as the spatial boundary where the user density decays to the minimum service threshold $\lambda_{\textrm{th}}$:
\begin{equation} 
	R_{\textrm{wf}}(t) \triangleq \left\{ r \in \mathbb{R}^+ : \lambda(r,t) = \lambda_{\textrm{th}} \right\}.
	\label{eq:wavefront_definition}
\end{equation}
Physically, $R_{\textrm{wf}}(t)$ delineates the effective edge of the serving swarm. To quantify its propagation dynamics, we define the wavefront phase velocity as the time derivative of the wavefront position:
\begin{equation} 
	v_{\textrm{wf}}(t) \triangleq \frac{\mathrm{d} R_{\textrm{wf}}(t)}{\mathrm{d} t}.
	\label{eq:vwf_def}
\end{equation}
During the expansion phase where $\dot{\sigma} > 0$, this wavefront propagates outward with a positive velocity $v_{\textrm{wf}}(t) > 0$. A critical implication is that any reactive strategy relying solely on the local density condition $\lambda \ge \lambda_{\textrm{th}}$ will track this moving boundary with a temporal lag. This necessitates a proactive mechanism to compensate for the activation latency.

\vspace*{-2mm}
\subsection{Adiabatic Link Quality Model and SINR}\label{S3.3}

To bridge the continuous macroscopic mobility and discrete microscopic transmission, we invoke the adiabatic assumption, also known as the time-scale separation principle. Analogous to the quasi-static block-fading model \cite{Tse2005}, this assumption holds because the task offloading time $T_{\textrm{pkt}}$ is orders of magnitude shorter than the characteristic evolution scale of the logistics tide $T_{\textrm{hydro}}$. Physically, $T_{\textrm{hydro}}$ characterizes the time required for significant macroscopic reshaping, defined as the inverse of the maximum relative expansion rate: $T_{\textrm{hydro}}\! \sim\! (\max |\dot{\sigma}/\sigma|)^{-1}\! \propto\! T_{\textrm{period}}$. Since the delivery cycle $T_{\textrm{period}}$ operates on hourly scale while $T_{\textrm{pkt}}$ is in milliseconds, $T_{\textrm{pkt}}\! \ll\! T_{\textrm{hydro}}$ is strictly satisfied. Consequently, the spatial intensity field $\lambda(\mathbf{x},t)$ and the aggregate interference are treated as time-invariant snapshots during each transmission interval.

Furthermore, the set of potential interferers is determined by the instantaneous binary activation states $\alpha(\mathbf{X}_i, t) \in \{0, 1\}$:
\begin{equation} 
	\Phi_{\textrm{active}}(t) = \{ \mathbf{X}_i \in \Phi_{\textrm{BS}} \mid \alpha(\mathbf{X}_i, t) = 1 \},
\end{equation}
where $\mathbf{X}_i \in \mathbb{R}^2$ denotes the spatial location of the $i$-th GBS.
In contrast to static homogeneous PPP networks, $\Phi_{\textrm{active}}(t)$ constitutes a time-varying, spatially inhomogeneous point process governed by the pulsating logistics tide.

For a typical UAV located at $\mathbf{x}_0$, the instantaneous signal-to-interference-plus-noise ratio (SINR) is given by:
\begin{equation}\label{eqSINR} 
	\textrm{SINR}(\mathbf{x}_0, t) = \frac{P_{\textrm{LAS}} h_0 L(\|\mathbf{x}_0 - \mathbf{X}^*\|)}{I_{\textrm{agg}}(\mathbf{x}_0, t) + P_{\mathrm{N}}},
\end{equation}
where $\mathbf{X}^* \in \Phi_{\textrm{active}}(t)$ is the serving BS that provides the maximum received power, and $P_{\mathrm{N}}$ is the noise power, while the aggregate interference $I_{\textrm{agg}}(\mathbf{x}_0, t)$ is the summation of signals from all other active BSs within the active region (AR) $\mathcal{A}_{\mathrm{on}}(t)$:
\begin{equation} 
	I_{\textrm{agg}}(\mathbf{x}_0, t) = \sum_{\mathbf{X}_i \in \Phi_{\textrm{active}}(t) \setminus \{\mathbf{X}^*\}} P_{\textrm{LAS}} h_i L(\|\mathbf{x}_0 - \mathbf{X}_i\|).
	\label{eq:interference_agg}
\end{equation}
In (\ref{eqSINR}) and (\ref{eq:interference_agg}), $h_0$ and $h_i$ are the respective small-scale fading channel power gains, which are exponentially distributed for Rayleigh fading, $L(d)\! =\! d^{-\nu}$ is the power-law path loss function with exponent $\nu > 2$, and $P_{\textrm{LAS}}$ denotes the transmission power of the LAS module.

\vspace*{-2mm}
\subsection{Spatiotemporal Performance Metrics}\label{sec:system_metrics} 

To quantify the trade-off between sustainability and reliability under dynamic load, we define three key metrics over a logistics cycle $T_{\textrm{period}}$. Let $\boldsymbol{\alpha}\! =\! \{\alpha(\mathbf{x},t)\! \in\! \{0,1\} \mid \mathbf{x}\! \in\! \mathcal{A}, t\! \in\! [0, T_{\textrm{period}}]\}$ be the binary activation policy of the LAS layer.

\subsubsection{Total Served Workload Volume}

The primary utility of the logistics network is the computational tasks offloaded by the UAVs. Unlike static networks where the instantaneous sum-rate suffices, the non-stationary logistics tide requires evaluating the cumulative service volume. We define the total served traffic, denoted as $\mathcal{T}_{\textrm{total}}(\boldsymbol{\alpha})$, as the spatio-temporal integration of the successful service capacity density:
\begin{equation} 
	\mathcal{T}_{\textrm{total}}(\boldsymbol{\alpha}) \triangleq \int_0^{T_{\textrm{period}}} \int_{\mathcal{A}} \mathcal{R}(\mathbf{x}, t; \boldsymbol{\alpha}) \, \mathrm{d}\mathbf{x} \, \mathrm{d}t,
	\label{eq:T_total_def}
\end{equation}
where $\mathcal{R}(\mathbf{x}, t; \bm{\alpha})$ represents the successful service capacity density, also known as the effective spatial throughput, formulated as $\mathcal{R}(\mathbf{x}, t; \bm{\alpha}) = B\, \eta_{\mathrm{SE}} \, \lambda(\mathbf{x},t) \mathsf{P}_{\mathrm{cov}}(\mathbf{x}, t \mid \boldsymbol{\alpha})$. Here, $\mathsf{P}_{\mathrm{cov}}(\mathbf{x}, t \mid \boldsymbol{\alpha}) = \mathbb{P}(\mathrm{SINR}(\mathbf{x},t) > \gamma)$ denotes the service availability which is mathematically quantified by the instantaneous coverage probability, $B$ is the system bandwidth, and $\eta_{\mathrm{SE}}$ represents the target spectral efficiency.

In the context of mission-critical logistics, this link-level coverage probability serves as the fundamental metric quantifying the system-level service availability.

\subsubsection{Total Energy Consumption}

The network energy consumption is directly determined by the spatiotemporal activation pattern of the GBSs. The total energy consumption over one period is defined as:
\begin{equation} 
	E_{\textrm{total}}(\boldsymbol{\alpha}) \triangleq \int_0^{T_{\textrm{period}}} \int_{\mathcal{A}} \lambda_{\textrm{BS}} P_{\textrm{GBS}}(\alpha(\mathbf{x},t)) \, \mathrm{d}\mathbf{x} \, \mathrm{d}t,
	\label{eq:E_total_def}
\end{equation}
where $P_{\textrm{GBS}}(\cdot)$ models the power consumption of a single LAS module:
\begin{equation} 
	P_{\textrm{GBS}}(\alpha) = P_{\textrm{act}} \cdot \alpha + P_{\textrm{slp}} \cdot (1 - \alpha).
\end{equation}
Here, $P_{\textrm{act}}$ and $P_{\textrm{slp}}$ denote the power consumption in the active and sleep modes, respectively.

\subsubsection{Service Unavailability (Wavefront Outage)}

To quantify the reliability risk at the expanding edge, we define the instantaneous service unavailability, $\mathcal{O}(t)$, as the fraction of workload mass located in the ``outage zone''—the spatial gap between the physical demand wavefront $R_{\textrm{wf}}(t)$ and the effective service boundary $R_{\textrm{active}}(t)$ induced by the control policy $\boldsymbol{\alpha}$.

Mathematically, assuming that the policy generates a contiguous coverage area of radius $R_{\textrm{active}}(t)$, $\mathcal{O}(t)$ is calculated as the normalized user mass within the gap $[R_{\textrm{active}}(t), R_{\textrm{wf}}(t)]$:
\begin{equation} 
	\mathcal{O}(t; \boldsymbol{\alpha}) \triangleq \frac{\left[ \int_{R_{\textrm{active}}(t)}^{R_{\textrm{wf}}(t)} \lambda(r,t) \cdot 2\pi r \, \mathrm{d}r \right]^+}{\int_{0}^{R_{\textrm{wf}}(t)} \lambda(r,t) \cdot 2\pi r \, \mathrm{d}r} .
	\label{eq:outage_metric}
\end{equation}
The unavailability is zero if the service boundary exceeds the wavefront, i.e., $R_{\textrm{active}} \ge R_{\textrm{wf}}$. This metric specifically captures the percentage of the logistics swarm—most notably the strategic leading edge—that is physically present but disconnected due to infrastructure hysteresis.

\vspace*{-2mm}
\subsection{QoS-Constrained EE Maximization Problem}\label{sec:problem_formulation} 

\subsubsection{Objective Function and Constraints}

The primary objective of sustainable mobile computing networks is to maximize the network EE, $\eta_{\textrm{EE}}(\boldsymbol{\alpha})$, which is the ratio of the total served workload volume to the total energy consumption over the logistics cycle:
\begin{equation} 
	\eta_{\textrm{EE}}(\boldsymbol{\alpha}) \triangleq \frac{\mathcal{T}_{\textrm{total}}(\boldsymbol{\alpha})}{E_{\textrm{total}}(\boldsymbol{\alpha})}.
\end{equation}
Mathematically, we formulate the network control problem as a QoS-constrained EE maximization problem:
\begin{subequations} 
	\label{eq:opt_problem_all}
	\begin{align}
		\mathcal{P}_0: \quad \max_{\boldsymbol{\alpha}} \quad & \eta_{\textrm{EE}}(\boldsymbol{\alpha}), \label{eq:obj_func} \\
		\text{s.t.} \quad & \mathcal{O}(t; \boldsymbol{\alpha}) \le \epsilon_{\textrm{th}}, \quad \forall t \in \mathbb{T}_{\text{exp}}, \label{eq:constraint_qos} \\
		& \alpha(\mathbf{x},t) \in \{0, 1\}, \quad \forall \mathbf{x}, t, \label{eq:constraint_binary}
	\end{align}
\end{subequations}
where $\mathbb{T}_{\text{exp}}$ denotes the time interval corresponding to the expansion phase of the logistics tide, and $\epsilon_{\textrm{th}}$ is the maximum allowed service unavailability threshold, e.g., $\epsilon_{\textrm{th}}\! =\! 0.01$.

\subsubsection{Generalized Effective Energy Efficiency}

Problem $\mathcal{P}_0$ is a non-convex fractional programming problem. Furthermore, the constraint \eqref{eq:constraint_qos} imposes an infinite number of instantaneous QoS requirements over the continuous domain $\mathbb{T}_{\text{exp}}$, rendering direct global optimization intractable. 

To obtain a tractable formulation, we adopt a penalty-function method to transform the constrained optimization problem into an unconstrained maximization of a surrogate objective function, termed the generalized effective EE, denoted as $\eta_{\textrm{eff}}$.

\begin{Definition}
\label{def:effective_ee} 
We define the effective EE as the following QoS-penalized metric:
\begin{equation} 
	\eta_{\mathrm{eff}} \triangleq \eta_{\mathrm{EE}} \cdot \mathcal{U}_{\beta}\left( \sup_{t \in \mathcal{T}_{\mathrm{exp}}} \mathcal{O}(t) \right).
	\label{eq:effective_ee}
\end{equation}
Here, $\mathcal{U}_{\beta}(\cdot)$ serves as a soft barrier function, or a multiplicative penalty function, derived from the reliability constraint. We adopt the polynomial form:
\begin{equation} 
	\mathcal{U}_{\beta}(x) = 
	\begin{cases}
		(1 - x)^\beta, & \text{if } 0 \le x < 1, \\
		0, & \text{if } x \ge 1,
	\end{cases}
\end{equation}
where $\beta \ge 1$ is the penalty factor.
\end{Definition}

\begin{remark}
The formulation of $\eta_{\mathrm{eff}}$ reconciles the conflicting goals of $\mathcal{P}_0$. Physically, the term $(1-x)$ represents the service availability. As the penalty factor $\beta$ increases, the cost of violating the outage constraint \eqref{eq:constraint_qos} becomes severe, forcing the optimal solution of the unconstrained proxy problem to asymptotically approach the feasible region of the original problem $\mathcal{P}_0$. This conversion allows us to evaluate the system performance using a single scalar metric that  reflects the holistic operational value under QoS constraints.
\end{remark}

\vspace*{-2mm}
\section{Flux-Aware Asymmetric Activation Strategy}\label{sec:system_strategy} 

To effectively manage the high-frequency pulsating logistics tide, the network control logic must bridge the gap between macroscopic fluid dynamics and discrete hardware states. We propose a flux-aware asymmetric activation strategy. Unlike symmetric thresholding, this strategy decouples the activation and deactivation dynamics to accommodate the distinct kinematic requirements of the expansion and contraction phases.

\vspace*{-2mm}
\subsection{Principles of Asymmetric Activation}\label{S4.1}

The core idea of the proposed strategy is to align the network response with the distinct risk profiles inherent in the pulsating workload. Rather than applying a uniform control logic, we design tailored state-transition mechanisms specifically adapted to the unique vulnerabilities of the expansion and contraction phases.

\subsubsection{Expansion Phase ($\dot{\sigma} > 0$)} 

The primary risk is the wavefront outage caused by boot-up latency. To mitigate this, the strategy employs a proactive activation mechanism, leveraging the spatial phase-lead of the information flux to trigger GBSs before the workload density physically accumulates.

\subsubsection{Contraction Phase ($\dot{\sigma} \le 0$)}

The primary challenge is maintaining connectivity for agents located at the spatial periphery, i.e., the Gaussian tail, as the swarm recedes. To mitigate the risk of premature disconnection, the strategy employs a hysteresis retention mechanism. The flux trigger is deactivated because the inward-pointing macroscopic velocity vector no longer serves as a valid precursor for proactive activation. Instead, the control logic relies exclusively on a lowered density holding threshold, ensuring that the effective service boundary contracts at a slower rate than the physical decay of user density.

\vspace*{-2mm}
\subsection{Triggering Logic Formulation}\label{S4.2}

Let $\alpha(\mathbf{x}, t)\! \in\! \{0, 1\}$ denote the binary activation state of a GBS at location $\mathbf{x}$. We construct the state transition logic based on three distinct Boolean triggering conditions: the proactive flux trigger $\mathcal{C}_{\textrm{flux}}$, the reactive load trigger $\mathcal{C}_{\textrm{load}}$, and the retention trigger $\mathcal{C}_{\textrm{hold}}$, which are defined by:
\begin{equation} 
	\begin{cases}
		\mathcal{C}_{\textrm{flux}}: & \|\mathbf{\Phi}(\mathbf{x}, t)\| \ge \delta_{\textrm{th}}, \\
		\mathcal{C}_{\textrm{load}}: & \lambda(\|\mathbf{x}\|, t) \ge \lambda_{\textrm{act}}, \\
		\mathcal{C}_{\textrm{hold}}: & \lambda(\|\mathbf{x}\|, t) \ge \lambda_{\textrm{hold}},
	\end{cases}
\end{equation}
where $\delta_{\textrm{th}}$ is the flux sensitivity threshold, $\lambda_{\textrm{act}}$ is the activation density threshold, and we introduce a lower holding threshold $\lambda_{\textrm{hold}}\! <\! \lambda_{\textrm{act}}$, such as $\lambda_{\textrm{hold}}\! \to\! 0$,  to implement the hysteresis mechanism.
The activation protocol is defined as:
\begin{equation} 
	\alpha(\mathbf{x}, t) = 
	\begin{cases}
		\mathbb{I}\left( \mathcal{C}_{\textrm{load}} \lor \mathcal{C}_{\textrm{flux}} \right), & \text{if } \dot{\sigma}(t) > 0, \\
		\mathbb{I}\left( \mathcal{C}_{\textrm{hold}} \right) \cdot \alpha(\mathbf{x}, t-1), & \text{if } \dot{\sigma}(t) \le 0.
	\end{cases}
	\label{eq:activation_rule}
\end{equation}

To facilitate practical deployment in resource-constrained network controllers, we formalize the proposed logic in Algorithm \ref{alg:activation}.

\begin{algorithm}[!t]
	\small
	\caption{Flux-Aware Asymmetric Cycle Control}
	\label{alg:activation}
	\begin{algorithmic}[1]
		\Require Time $t$; Fluid state $\{\sigma, \dot{\sigma}, N\}$; Previous state $\boldsymbol{\alpha}_{t-1}$.
		\Ensure New activation state $\boldsymbol{\alpha}_t$.
		\State \textbf{Determine Phase:} 
		\If {$\dot{\sigma}(t) > 0$} \textit{Mode} $\leftarrow$ \textsc{Expansion}
		\Else \ \textit{Mode} $\leftarrow$ \textsc{Contraction}
		\EndIf
		\State \textbf{Update GBS States:}
		\For {each GBS $i$ at location $\mathbf{x}_i$}
		\State Compute the local density $\lambda(\mathbf{x}_i, t)$ using \eqref{eq:gaussian_intensity}.
		\If {\textit{Mode} is \textsc{Expansion}}
		\State \Comment{Compute Flux only when needed for proactive trigger}
		\State $\|\mathbf{\Phi}(\mathbf{x}_i)\| \leftarrow \lambda(\mathbf{x}_i) \cdot \|\mathbf{x}_i\| \cdot \frac{\dot{\sigma}}{\sigma}$ 
		\State \textbf{Trigger:} $\mathcal{C}_{\textrm{flux}} \leftarrow (\|\mathbf{\Phi}(\mathbf{x}_i)\| \ge \delta_{\textrm{th}})$
		\State $\alpha(\mathbf{x}_i, t) \leftarrow \mathbb{I}(\lambda(\mathbf{x}_i) \ge \lambda_{\textrm{act}} \lor \mathcal{C}_{\textrm{flux}})$
		\Else \ \Comment{Hysteresis Retention Phase}
		\State $\alpha(\mathbf{x}_i, t) \leftarrow \mathbb{I}(\lambda(\mathbf{x}_i) \ge \lambda_{\textrm{hold}}) \cdot \alpha(\mathbf{x}_i, t-1)$
		\EndIf
		\EndFor
		\State \Return $\boldsymbol{\alpha}_t$
	\end{algorithmic}
\end{algorithm}

\vspace*{-2mm}
\subsection{Complexity and Implementation Analysis} \label{S4.3}

The proposed strategy exhibits linear computational complexity $O(M)$, where $M$ is the number of GBSs. Unlike combinatorial optimization or reinforcement learning approaches that typically scale with $O(M^2)$ or higher, our method relies on calculating analytical fluid metrics ($\lambda, \|\mathbf{\Phi}\|$) locally for each node. 
Beyond computational efficiency, the framework ensures minimal signaling overhead. Unlike centralized optimization schemes requiring real-time channel state information (CSI) from all UAVs, our framework operates on macroscopic statistical parameters, such as $\sigma(t), N(t)$, that evolve slowly on the scale of minutes (the logistics cycle) rather than milliseconds. 
Consequently, the broadcast of these fluid parameters entails negligible bandwidth cost, and the localized flux calculation supports a fully distributed, event-driven activation logic that requires no backhaul coordination between neighboring GBSs. This low overhead is critical for real-time resource provisioning in large-scale mobile systems.

\vspace*{-2mm}
\subsection{Spatiotemporal Mechanism Analysis}\label{S4.4}

The proposed logic results in a dynamic network topology that physically adapts to the logistics tide.

\subsubsection{Traveling Wave Activation (Expansion)}

During the expansion phase, the condition $\|\mathbf{\Phi}\| \ge \delta_{\textrm{th}}$ creates a proactive guard ring ahead of the density wavefront. This does not imply simultaneous global activation. Since the flux magnitude $\|\mathbf{\Phi}(r,t)\|$ is spatially graded (peaking at the wavefront), the activation condition is satisfied sequentially from the center to the periphery. This results in a ``traveling active zone'' that propagates ahead of the UAV swarm, ensuring resources are provisioned selectively and just-in-time.

\subsubsection{Hysteresis Retention (Contraction)}

As the swarm recedes, the flux intensity diminishes. The protocol automatically switches to the retention mode governed by $\mathcal{C}_{\textrm{hold}}$. This ensures that the active region shrinks slower than the density decay, maintaining connectivity for tail users until the local load becomes negligible.

\begin{remark}
A potential concern with kinematic triggers is the vulnerability to false alarms caused by high-velocity but low-density transient outliers. However, under the proposed fluid-dynamic framework, the activation is governed by the information flux ($\mathbf{\Phi} = \lambda \mathbf{v}$), which acts as a robust spatiotemporal momentum filter. Mathematically, the exponential spatial decay of the density field strictly dominates the linear growth of the macroscopic velocity, ensuring that the flux magnitude naturally vanishes in sparse far-field regions. Physically, constrained by the maximum aerodynamic speed of UAVs, a sparse outlier with negligible density cannot generate sufficient momentum to cross the macroscopic threshold $\delta_{\mathrm{th}}$. Consequently, such transient anomalies fail to trigger $\mathcal{C}_{\mathrm{flux}}$, thereby preventing premature infrastructure activation and energy waste.
\end{remark}

\vspace*{-2mm}
\section{Performance Analysis}\label{sec:performance_analysis} 

Given the spatial inhomogeneity of the pulsating tide, for tractability, we employ a local homogeneity approximation to derive the CP and EE metrics.

\vspace*{-2mm}
\subsection{Spatio-Temporal Activation Geometry}\label{S5.1}

The proposed asymmetric strategy yields a phase-dependent network topology. First, we formally derive the effective activation radius, denoted as $R_{\textrm{active}}(t)$.

\begin{Proposition}
\label{prop:activation_radius} 
The effective activation radius $R_{\mathrm{active}}(t)$ that adapts to the expansion-contraction cycle of the logistics tide is expressed as
\begin{equation} 
	R_{\mathrm{active}}(t) = 
	\begin{cases}
		\max \left\{ R_{\mathrm{act}}(t), R_{\mathrm{flux}}(t) \right\}, & \text{if } \dot{\sigma}(t) > 0, \\
		R_{\mathrm{hold}}(t), & \text{if } \dot{\sigma}(t) \le 0,
	\end{cases}
	\label{eq:R_active_cases}
\end{equation}
where the component radii are given by:
\begin{subequations} 
	\begin{align}
		& R_{\mathrm{act}}(t) = \sigma(t) \sqrt{2 \left[ \ln \left( \frac{N(t)}{2\pi \sigma^2(t) \lambda_{\mathrm{act}}} \right) \right]^+ }, \label{eq:R_act_def} \\
		& R_{\mathrm{hold}}(t) = \sigma(t) \sqrt{2 \left[ \ln \left( \frac{N(t)}{2\pi \sigma^2(t) \lambda_{\mathrm{hold}}} \right) \right]^+ }, \label{eq:R_hold_def} \\
		& R_{\mathrm{flux}}(t) = \sup \left\{ r \in \mathbb{R}^+ : \frac{r}{\sigma^3(t)} e^{-\frac{r^2}{2\sigma^2(t)}} = \mathcal{K}_{\mathrm{th}}(t) \right\}, \label{eq:R_flux_def}
	\end{align}
\end{subequations}
with $\mathcal{K}_{\mathrm{th}}(t)=\frac{2\pi \delta_{\mathrm{th}}}{N(t) |\dot{\sigma}(t)|}$ representing the normalized flux threshold. 
\end{Proposition}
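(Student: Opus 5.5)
The plan is to treat the activation protocol \eqref{eq:activation_rule} as a level-set problem for radially symmetric functions of $r=\|\mathbf{x}\|$. Each triggering condition then carves out a region in $\mathbb{R}^2$, and $R_{\mathrm{active}}(t)$ is the outer radius of the union of these regions. Since $\lambda(r,t)$ in \eqref{eq:gaussian_intensity} is strictly decreasing in $r$, each density condition $\lambda\ge\lambda_{\mathrm{act}}$ or $\lambda\ge\lambda_{\mathrm{hold}}$ defines a disk. We solve $\frac{N}{2\pi\sigma^2}e^{-r^2/(2\sigma^2)}=\lambda_{\mathrm{act}}$ for $r$ by taking logarithms, which gives $r^2=2\sigma^2\ln\!\big(N/(2\pi\sigma^2\lambda_{\mathrm{act}})\big)$. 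The $[\cdot]^+$ handles the case where the peak density $N/(2\pi\sigma^2)$ is already below the threshold: the disk is then empty and the radius is zero. This yields \eqref{eq:R_act_def}, and the identical computation with $\lambda_{\mathrm{hold}}$ yields \eqref{eq:R_hold_def}.

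For the flux trigger, we substitute \eqref{eq:flux_def} into $\mathcal{C}_{\mathrm{flux}}$. This gives $\|\mathbf{\Phi}\|=\frac{N}{2\pi\sigma^2}e^{-r^2/(2\sigma^2)}\, r\,\frac{|\dot\sigma|}{\sigma}\ge\delta_{\mathrm{th}}$. Dividing by $N|\dot\sigma|/(2\pi)$ produces exactly $\frac{r}{\sigma^3}e^{-r^2/(2\sigma^2)}\ge\mathcal{K}_{\mathrm{th}}(t)$. The profile $g(r)=\frac{r}{\sigma^3}e^{-r^2/(2\sigma^2)}$ is not monotone: it vanishes at $r=0$, increases up to $r=\sigma$, and then decreases to zero. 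Hence the flux super-level set is an annulus $[r_-,r_+]$ rather than a disk. Its outer boundary is the larger root of $g=\mathcal{K}_{\mathrm{th}}$, which is why $R_{\mathrm{flux}}$ is defined through the $\sup$ in \eqref{eq:R_flux_def}. The set is empty if $\mathcal{K}_{\mathrm{th}}>g(\sigma)=e^{-1/2}/\sigma^2$, and in that case we adopt the convention $\sup\emptyset=0$.

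In the expansion phase the active set is $\{\mathcal{C}_{\mathrm{load}}\}\cup\{\mathcal{C}_{\mathrm{flux}}\}$, the union of a disk of radius $R_{\mathrm{act}}$ and an annulus. The main obstacle is justifying that this union is contiguous, as the outage metric \eqref{eq:outage_metric} assumes, so that its outer radius equals $\max\{R_{\mathrm{act}},R_{\mathrm{flux}}\}$. A gap could occur if $r_->R_{\mathrm{act}}$. I would handle this in two ways. First, I would note that $R_{\mathrm{active}}$ is by definition the effective outer service boundary used in \eqref{eq:outage_metric}; the quantity relevant to the wavefront is therefore the outermost activated radius, which is the max in either case. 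Second, I would argue that in the operating regime $r_-$ lies inside the load disk, because near the hub the density is high. In the degenerate case where a gap appears, the max still correctly gives the outer radius, and I would remark that the inner gap lies inside the wavefront.

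In the contraction phase, \eqref{eq:activation_rule} multiplies $\mathbb{I}(\mathcal{C}_{\mathrm{hold}})$ by the previous state. The active set is therefore the intersection of the hold disk with the previously active set. The step here is to argue that at the transition instant the previously active region, of radius $\max\{R_{\mathrm{act}},R_{\mathrm{flux}}\}$, contains the hold disk once $\lambda_{\mathrm{hold}}\to0$ is small, or more precisely in the regime considered. After that, the hold disk shrinks while $\dot\sigma\le0$, which drives the contraction. This reduces the active region to the disk of radius $R_{\mathrm{hold}}(t)$. I expect this containment to be the second delicate point, and I would state it as the modelling regime (with $\lambda_{\mathrm{hold}}$ tuned accordingly) rather than prove it in full generality.
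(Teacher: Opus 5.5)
Your route is the same as the paper's: you invert the Gaussian by taking logarithms, with $[\cdot]^+$ covering the empty-disk case. You use unimodality of $g(r)=\frac{r}{\sigma^3}e^{-r^2/(2\sigma^2)}$, which peaks at $g(\sigma)=e^{-1/2}/\sigma^2$, so that $R_{\mathrm{flux}}$ is the outer root. You then read off the boundary phase by phase. The paper's synthesis step simply asserts that ``either trigger'' gives the max and that disabling the flux trigger leaves $R_{\mathrm{hold}}$; it discusses neither of your two delicate points. Your resolution of the first is correct and sufficient. The outer radius of a disk union an annulus is $\max\{R_{\mathrm{act}},R_{\mathrm{flux}}\}$ with or without an inner hole (given your convention $\sup\emptyset=0$), and only the outer radius enters \eqref{eq:outage_metric}.

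The argument you sketch for the second point would fail, because it runs in the wrong direction. Lowering $\lambda_{\mathrm{hold}}$ \emph{enlarges} $R_{\mathrm{hold}}$, and $R_{\mathrm{hold}}>R_{\mathrm{act}}$ whenever the hold disk is nonempty. So $\lambda_{\mathrm{hold}}\to0$ makes containment of the hold disk in the inherited active region harder, not easier.

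At the phase switch $t_0$ the containment generically fails. As $\dot\sigma\to0^+$ one has $\mathcal{K}_{\mathrm{th}}\to\infty$, so eventually $\mathcal{K}_{\mathrm{th}}>g(\sigma)$ and the flux set empties. The expansion rule is memoryless, so the region handed over to contraction (in the limit $t\to t_0^-$) is the disk of radius $R_{\mathrm{act}}(t_0)<R_{\mathrm{hold}}(t_0)$. Taking the factor $\alpha(\mathbf{x},t-1)$ literally, the contraction-phase radius is $\min\{R_{\mathrm{act}}(t_0),\inf_{s\in[t_0,t]}R_{\mathrm{hold}}(s)\}$. This equals $R_{\mathrm{hold}}(t)$ only when $R_{\mathrm{hold}}(t)$ is the running minimum and lies below $R_{\mathrm{act}}(t_0)$.

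Hence the second branch of the proposition cannot be derived from the retention rule under any small-$\lambda_{\mathrm{hold}}$ assumption. It must be adopted as the modelling reading of the hysteresis rule, with the boundary set by $\mathcal{C}_{\mathrm{hold}}$ alone, which is what the paper does implicitly. Present it that way, and drop the claim that small $\lambda_{\mathrm{hold}}$ secures the containment.
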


\begin{proof}
	The detailed derivation is provided in Section S-II of the Supplementary File.
\end{proof}

This derived activation radius reveals distinct physical behaviors across the expansion-contraction cycle. 

\subsubsection{Expansion (Proactive Guard Ring)} 

During the expansion phase, i.e., $\dot{\sigma}\! >\! 0$, the macroscopic velocity $v_r\! \propto\! r$ amplifies the flux intensity at the periphery. This kinematic gain pushes the flux boundary $R_{\textrm{flux}}(t)$ beyond the static density boundary $R_{\textrm{act}}(t)$, creating a proactive guard ring of width $\Delta R_{\textrm{lead}}\! =\! R_{\textrm{flux}}\! -\! R_{\textrm{act}}$ that spatially compensates for the boot-up latency.

\subsubsection{Contraction (Hysteresis Retention)} 

During the contraction phase, i.e., $\dot{\sigma} \le 0$, the flux trigger is disabled. The activation boundary is governed strictly by the holding threshold $\lambda_{\textrm{hold}}$. Since $\lambda_{\textrm{hold}} < \lambda_{\textrm{act}}$, the condition $R_{\textrm{hold}}(t) > R_{\textrm{act}}(t)$ holds. This creates a retention zone of width $\Delta R_{\textrm{lag}} = R_{\textrm{hold}} - R_{\textrm{act}}$, preventing the premature disconnection of users at the spatial periphery as the swarm recedes.

\vspace*{-2mm}
\subsection{Local Service Availability Analysis}\label{sec:local_coverage} 

The dynamic activation of LAS modules creates a spatially finite and inhomogeneous interference field. Unlike standard stochastic geometry approaches that assume an infinite field of interferers, we explicitly model the truncation of interference at the activation boundary $R_{\textrm{active}}(t)$ to derive an accurate CP.

\begin{Theorem}
\label{thm:inst_local_cov} 
Consider the finite active region bounded by $R_{\mathrm{active}}(t)$. The instantaneous service availability $\mathrm{CP}$ for a typical UAV within the active zone is expressed as:
\begin{align} 
	\mathsf{P}_{\mathrm{cov}}(r,t) \approx \int_{0}^{R_{\mathrm{active}}(t)} & 2\pi\lambda_{\mathrm{BS}}\, y \, \exp\bigg( -\pi\lambda_{\mathrm{BS}}\, y^2 \Big( 1 \notag \\
	& + \mathcal{F}_{\mathrm{finite}}(y, R_{\mathrm{active}}) \Big) \bigg) \, \mathrm{d}y,
	\label{eq:Pcov_finite}
\end{align}
where $\mathcal{F}_{\mathrm{finite}}(y, R_{\mathrm{active}})$ is the finite interference factor explicitly capturing the boundary effect:
\begin{align} 
	&\mathcal{F}_{\mathrm{finite}}(y, R_{\mathrm{active}}) = {}_2F_1\left(1, \kappa; 1+\kappa; -\gamma \right) \notag \\
	&\quad - \left(\frac{R_{\mathrm{active}}}{y}\right)^2 {}_2F_1\left(1, \kappa; 1+\kappa; -\gamma \left(\frac{y}{R_{\mathrm{active}}}\right)^\nu \right),
	\label{eq:F_finite}
\end{align}
with $\kappa = 2/\nu$ and $\gamma$ being the SINR threshold.
\end{Theorem}

\begin{proof}
	The detailed derivation is provided in Section S-III of the Supplementary File.
\end{proof}

\begin{remark}
The second term in \eqref{eq:F_finite} represents the boundary correction. As $R_{\mathrm{active}} \to \infty$, this term vanishes, reducing to the classical infinite plane solution. However, near the network edge where $y \to R_{\textrm{active}}$, this term significantly reduces the estimated interference.
\end{remark} 

\vspace*{-2mm}
\subsection{Network Energy Efficiency}\label{S5.3}

By substituting the phase-dependent activation radius derived in Proposition \ref{prop:activation_radius} into the metric definitions, we obtain the closed-form expression for EE.

\begin{Theorem}
\label{theorem:period_ee} 
The period-average EE of the network, defined as the ratio of total served traffic to total energy consumption over a cycle $T_{\mathrm{period}}$, is given by:
\begin{equation} 
	\eta_{\mathrm{EE}} = \frac{\mathcal{T}_{\mathrm{total}}}{E_{\mathrm{total}}} = \frac{B\, \eta_{\rm SE} \int_0^{T_{\mathrm{period}}} \mathcal{M}(t) \bar{\mathsf{P}}_{\mathrm{cov}}(t) \, \mathrm{d}t} {\int_0^{T_{\mathrm{period}}} P(t) \, \mathrm{d}t},
	\label{eq:theorem_ee}
\end{equation}
where $\bar{\mathsf{P}}_{\mathrm{cov}}(t)$ denotes the spatially averaged CP over the dynamic active region $R_{\mathrm{active}}(t)$, approximated as:
\begin{equation} 
	\bar{\mathsf{P}}_{\mathrm{cov}}(t) \approx \frac{1}{1 + \mathcal{F}_{\mathrm{finite}}(R_{\mathrm{active}}(t))},
\end{equation}
with $\mathcal{F}_{\mathrm{finite}}$ given in \eqref{eq:F_finite}, while $\mathcal{M}(t)$ and $P(t)$ are the served workload mass and network power consumption defined below.

\textit{1) Served User Mass $\mathcal{M}(t)$:} This term represents the aggregate workload demand covered by the active network, derived from the Gaussian integration:
\begin{equation}	\label{eq:mt} 
	\mathcal{M}(t) = N(t) \left( 1 - \exp\left( - \frac{R_{\mathrm{active}}^2(t)}{2\sigma^2(t)} \right) \right).
\end{equation}
	
\textit{2) Network Power Consumption $P(t)$:} This term captures the instantaneous power cost of the infrastructure:
\begin{equation}\label{eq:pt} 
	P(t) = \pi R_{\mathrm{active}}^2(t) \lambda_{\mathrm{BS}} \Delta P + P_{\mathrm{base}},
\end{equation}
where $\Delta P = P_{\mathrm{act}} - P_{\mathrm{slp}}$ represents the activation power penalty, and $P_{\mathrm{base}} = |\mathcal{A}| \lambda_{\mathrm{BS}} P_{\mathrm{slp}}$ denotes the baseline sleep power of the entire service area $\mathcal{A}$.
\end{Theorem}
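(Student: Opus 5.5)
The plan is to evaluate the numerator and denominator of $\eta_{\mathrm{EE}}=\mathcal{T}_{\mathrm{total}}/E_{\mathrm{total}}$ separately, starting from the definitions \eqref{eq:T_total_def} and \eqref{eq:E_total_def}. In both, the policy $\boldsymbol{\alpha}$ enters only through the contiguous active disk of radius $R_{\mathrm{active}}(t)$ given by Proposition~\ref{prop:activation_radius}. By Fubini, each spatio-temporal integral becomes a time integral of an instantaneous spatial integral over $\mathcal{A}$, so it suffices to compute the integrands at fixed $t$. The adiabatic assumption of Section~\ref{S3.3} justifies treating each time instant as a static snapshot.

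\textbf{Denominator.} Since $\alpha(\mathbf{x},t)=\mathbb{I}(\|\mathbf{x}\|\le R_{\mathrm{active}}(t))$, write
\[
P_{\mathrm{GBS}}(\alpha)=P_{\mathrm{slp}}+\Delta P\,\alpha .
\]
Integrating over $\mathcal{A}$, the first term gives $|\mathcal{A}|\lambda_{\mathrm{BS}}P_{\mathrm{slp}}=P_{\mathrm{base}}$. The second term gives $\lambda_{\mathrm{BS}}\Delta P\,\pi R_{\mathrm{active}}^2(t)$, assuming the active disk lies inside $\mathcal{A}$. Together these yield $P(t)$ in \eqref{eq:pt}, and $E_{\mathrm{total}}=\int_0^{T_{\mathrm{period}}}P(t)\,\mathrm{d}t$ follows immediately.

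\textbf{Numerator.} Outside the active disk, $\mathsf{P}_{\mathrm{cov}}=0$ because no LAS server is available; the sidelobe terrestrial links carry only command and control. The instantaneous served traffic is therefore
\[
B\eta_{\mathrm{SE}}\int_0^{R_{\mathrm{active}}}\lambda(r,t)\,\mathsf{P}_{\mathrm{cov}}(r,t)\,2\pi r\,\mathrm{d}r .
\]
I would invoke the local homogeneity approximation announced at the start of Section~\ref{sec:performance_analysis}: the coverage probability varies slowly relative to the Gaussian mass, so it is replaced by its spatial average $\bar{\mathsf{P}}_{\mathrm{cov}}(t)$ and pulled out of the integral. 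The remaining Gaussian integral is elementary. Substituting $u=r^2/(2\sigma^2)$ in \eqref{eq:gaussian_intensity} gives $\int_0^{R}\lambda\,2\pi r\,\mathrm{d}r=N(t)\bigl(1-e^{-R^2/(2\sigma^2)}\bigr)$, which is $\mathcal{M}(t)$ in \eqref{eq:mt}. Integrating in time then gives the numerator of \eqref{eq:theorem_ee}.

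\textbf{Averaged coverage (main obstacle).} It remains to justify $\bar{\mathsf{P}}_{\mathrm{cov}}(t)\approx 1/(1+\mathcal{F}_{\mathrm{finite}})$ from Theorem~\ref{thm:inst_local_cov}, for which I would use a two-step approximation.
\begin{itemize}
\item First, freeze the boundary factor in \eqref{eq:Pcov_finite} at a representative value $\mathcal{F}_{\mathrm{finite}}(R_{\mathrm{active}}(t))$, treating it as independent of $y$. This is accurate for the bulk of the disk when $R_{\mathrm{active}}$ greatly exceeds the typical serving distance $(\lambda_{\mathrm{BS}})^{-1/2}$.
\item Second, extend the $y$-integral to infinity, since the contribution from $y>R_{\mathrm{active}}$ is of order $e^{-\pi\lambda_{\mathrm{BS}}R_{\mathrm{active}}^2}$ and hence negligible. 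Then $\int_0^\infty 2\pi\lambda_{\mathrm{BS}}y\,e^{-\pi\lambda_{\mathrm{BS}}y^2(1+\mathcal{F})}\,\mathrm{d}y=1/(1+\mathcal{F})$.
\end{itemize}
This step carries the real approximation error, so I would state it explicitly as the interference-limited, dense-network regime in which the formula is claimed, noise being neglected as already implicit in \eqref{eq:F_finite}. Dividing the numerator by the denominator then completes \eqref{eq:theorem_ee}.
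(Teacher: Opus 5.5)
Your treatment of the numerator and denominator matches the paper's proof. The paper also sets $\mathsf{P}_{\mathrm{cov}}=0$ outside the active disk and defines $\bar{\mathsf{P}}_{\mathrm{cov}}(t)$ as the area average of $\mathsf{P}_{\mathrm{cov}}(r,t)$ over the disk of radius $R_{\mathrm{active}}(t)$. It pulls this average out of the Gaussian mass integral, citing near-constancy of the CP in the agent-dense core. It splits $\mathcal{A}$ into $\mathcal{A}_{\mathrm{on}}(t)$ and $\mathcal{A}_{\mathrm{off}}(t)$ with $|\mathcal{A}_{\mathrm{on}}(t)|=\pi R_{\mathrm{active}}^2(t)$; your proviso that the disk lies inside $\mathcal{A}$ is implicit there. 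The paper never derives $\bar{\mathsf{P}}_{\mathrm{cov}}\approx 1/(1+\mathcal{F}_{\mathrm{finite}})$; it only asserts it. Your freeze-and-extend step is therefore the only part that goes beyond the paper, and as written it fails.

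The step needs $\mathcal{F}_{\mathrm{finite}}(y,R_{\mathrm{active}})$ to be nonnegative and nearly $y$-independent for $y\ll R_{\mathrm{active}}$, as the remark after Theorem~\ref{thm:inst_local_cov} claims. Formula \eqref{eq:F_finite}, taken literally, does not behave this way. Write ${}_2F_1(1,\kappa;1+\kappa;z)=\kappa\int_0^1 t^{\kappa-1}(1-zt)^{-1}\,\mathrm{d}t$. For $z\le 0$ this lies in $(0,1]$ and increases to $1$ as $z\uparrow 0$.

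\begin{itemize}
\item Hence the second term of \eqref{eq:F_finite} is at least the first for every $y\le R_{\mathrm{active}}$, so $\mathcal{F}_{\mathrm{finite}}\le 0$.
\item That second term grows like $(R_{\mathrm{active}}/y)^2$ as $y/R_{\mathrm{active}}\to 0$.
\item Consequently $1+\mathcal{F}_{\mathrm{finite}}<0$ over the bulk of the serving-distance distribution, and $\int_0^\infty 2\pi\lambda_{\mathrm{BS}}y\,e^{-\pi\lambda_{\mathrm{BS}}y^2(1+\mathcal{F})}\,\mathrm{d}y$ diverges instead of returning $1/(1+\mathcal{F})$.
\end{itemize}

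Your heuristic does go through if you use the exponent that the truncated PGFL over $[y,R_{\mathrm{active}}]$ actually produces:
\[
\mathcal{F}(y,R_{\mathrm{active}})=\frac{2\gamma}{\nu-2}\Big[{}_2F_1\big(1,1-\kappa;2-\kappa;-\gamma\big)-\big(\tfrac{y}{R_{\mathrm{active}}}\big)^{\nu-2}\,{}_2F_1\big(1,1-\kappa;2-\kappa;-\gamma(\tfrac{y}{R_{\mathrm{active}}})^{\nu}\big)\Big].
\]
This equals $\frac{2}{y^2}\int_y^{R_{\mathrm{active}}} v\,[1+(v/y)^\nu/\gamma]^{-1}\,\mathrm{d}v\ge 0$. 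It increases to the infinite-plane value $\frac{2\gamma}{\nu-2}{}_2F_1(1,1-\kappa;2-\kappa;-\gamma)$ as $y/R_{\mathrm{active}}\to 0$. Freezing it is then reasonable when $\pi\lambda_{\mathrm{BS}}R_{\mathrm{active}}^2\gg 1$, and your tail bound $e^{-\pi\lambda_{\mathrm{BS}}R_{\mathrm{active}}^2}$ for extending the integral is valid. Note, though, that the boundary correction decays only like $(y/R_{\mathrm{active}})^{\nu-2}$, which is slow at the paper's $\nu=2.5$.
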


\begin{proof}
	The detailed derivation is provided in Section S-IV of the Supplementary File.
\end{proof}

This analytical form facilitates the numerical optimization of the flux threshold $\delta_{\mathrm{th}}$. The objective is to balance the size of the proactive guard ring, which dictates coverage reliability, against the incremental energy cost, thereby locating the Pareto-optimal operating point.

\vspace*{-2mm}
\subsection{Theoretical Analysis of Flux Phase-Lead}\label{sec:flux_gain} 

The effectiveness of the proposed strategy depends on whether the information flux $\mathbf{\Phi}$ provides an earlier detection signal than the scalar density $\lambda$ at the expanding wavefront. We formally prove this property by analyzing the spatial monotonicity of the flux-to-load ratio.

\begin{Lemma}
\label{lemma:flux_ratio} 
During the expansion phase where $\dot{\sigma}(t) > 0$, the ratio of the flux magnitude to the workload density, denoted as $\mathcal{R}_{\Phi/\lambda}(r,t)$, is a strictly monotonically increasing function of the radial distance $r$.
\end{Lemma}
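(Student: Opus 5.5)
The plan is to compute the ratio $\mathcal{R}_{\Phi/\lambda}(r,t)\triangleq \|\mathbf{\Phi}(\mathbf{x},t)\|/\lambda(r,t)$ directly from Definition~\ref{def:flux} and observe that the Gaussian density cancels. By \eqref{eq:flux_def}, $\|\mathbf{\Phi}(\mathbf{x},t)\| = \lambda(r,t)\,|v_r(r,t)| = \lambda(r,t)\, r\, |\dot{\sigma}(t)|/\sigma(t)$. Since $\lambda(r,t)>0$ for every finite $r$ by \eqref{eq:gaussian_intensity} (it is a positive multiple of an exponential, provided $N(t)>0$, which holds whenever $\delta_N<1$), the ratio is well defined on $\mathbb{R}^+$ and equals
\begin{equation*}
\mathcal{R}_{\Phi/\lambda}(r,t) = \|\mathbf{v}(\mathbf{x},t)\| = r\,\frac{\dot{\sigma}(t)}{\sigma(t)},
\end{equation*}
where the absolute value can be dropped because $\dot{\sigma}(t)>0$ during expansion. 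The ratio is therefore just the speed of the macroscopic velocity field, independent of $N(t)$ and of the exponential profile.

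With this expression, monotonicity is immediate. For fixed $t$ in the expansion phase, the coefficient $c(t)=\dot{\sigma}(t)/\sigma(t)$ is strictly positive. Indeed $\sigma(t)=\sigma_0(1+\delta_\sigma\cos(\omega t+\varphi))>0$ under the standing assumption $0\le\delta_\sigma<1$, and $\dot{\sigma}(t)>0$ by hypothesis. Hence $\partial \mathcal{R}_{\Phi/\lambda}/\partial r = c(t) > 0$ for all $r$, and $r\mapsto c(t)\,r$ is strictly increasing, which proves the lemma.

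The main obstacle is not the algebra but justifying the velocity model. The lemma relies on $v_r\approx r\dot{\sigma}/\sigma$, which neglects the source term $\dot N/N$ (see the Remark after \eqref{eq:continuity}). I would therefore state explicitly that the lemma holds under the transport-dominant regime $|\dot{\sigma}/\sigma|\gg|\dot N/N|$ used throughout.

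As a robustness check, I would also handle the exact solution of the continuity equation for the Gaussian field with the $\dot N$ term retained. The exact radial velocity contains an additional term of the form $-\frac{\dot N}{N}\,\frac{\sigma^2}{r}\big(e^{r^2/(2\sigma^2)}-1\big)$ (Section S-I). Monotonicity could then fail when $\dot N>0$ is large, so I would restrict the claim to the approximate field, consistent with how $\mathbf{\Phi}$ is defined in \eqref{eq:flux_def}.

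Finally, I would note the interpretation. Because the ratio grows linearly in $r$, for any fixed density level the flux grows relative to it with distance. This is what pushes $R_{\mathrm{flux}}$ beyond $R_{\mathrm{act}}$ in Proposition~\ref{prop:activation_radius} and produces the phase lead.
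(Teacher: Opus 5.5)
Your proposal is correct and follows the paper's proof essentially verbatim. Substituting $v_r\approx r\dot{\sigma}/\sigma$ cancels $\lambda$ to give $\mathcal{R}_{\Phi/\lambda}=r\dot{\sigma}/\sigma$, and then $\partial_r\mathcal{R}_{\Phi/\lambda}=\dot{\sigma}/\sigma>0$. Your added checks are sound and make explicit an approximation the paper leaves implicit: positivity of $\sigma(t)$ and $\lambda$, and the observation that monotonicity of $|v_r|$ can fail under the exact velocity when $\dot N>0$.
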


\begin{proof}
Recall the definition of information flux from Definition \ref{def:flux}: $\|\mathbf{\Phi}(r,t)\| = \lambda(r,t) \cdot |v_r(r,t)|$. Substituting the derived macroscopic velocity $v_r(r,t) \approx r \frac{\dot{\sigma}}{\sigma}$, the ratio is given by:
\begin{equation} 
	\mathcal{R}_{\Phi/\lambda}(r,t) \triangleq \frac{\|\mathbf{\Phi}(r,t)\|}{\lambda(r,t)} = |v_r(r,t)| = r \frac{\dot{\sigma}(t)}{\sigma(t)}.
\end{equation}
For any fixed time instant $t$ in the expansion phase where $\dot{\sigma} > 0$, the term $\frac{\dot{\sigma}}{\sigma}$ is a positive constant with respect to space. Taking the partial derivative with respect to $r$:
\begin{equation} 
	\frac{\partial}{\partial r} \mathcal{R}_{\Phi/\lambda}(r,t) = \frac{\dot{\sigma}(t)}{\sigma(t)} > 0.
\end{equation}
Thus, the relative strength of the flux signal compared to the density signal increases linearly with distance. This completes the proof.
\end{proof}

\begin{remark}
Lemma \ref{lemma:flux_ratio} provides the theoretical basis for the kinematic phase lead. At the swarm center where $r \to 0$, density dominates. However, at the far-field wavefront where $r \gg \sigma$, the high expansion velocity amplifies the flux signal. Consequently, for a decaying Gaussian tail, the flux threshold condition $\|\mathbf{\Phi}\| \ge \delta_{\mathrm{th}}$ is satisfied at a strictly larger radius than the density condition $\lambda \ge \lambda_{\mathrm{th}}$ (assuming normalized thresholds). This ensures that the flux-triggered proactive guard ring is generated ahead of the traffic load.
\end{remark}

\subsubsection{Quantification of Spatial Flux Gain}

Based on the monotonicity proven above, we quantify the spatial gain afforded by the flux-aware strategy.

\begin{Definition}
During the expansion phase where $\dot{\sigma} > 0$, the flux gain $\mathcal{G}_{\mathrm{flux}}(t)$ is defined as the spatial difference between the proactive flux boundary and the reactive density boundary:
\begin{equation} 
	\mathcal{G}_{\mathrm{flux}}(t) \triangleq R_{\mathrm{flux}}(t) - R_{\mathrm{act}}(t).
\end{equation}
Since the macroscopic velocity $v_r(r) \approx r \dot{\sigma}/\sigma$ increases linearly with distance, the flux intensity $\|\mathbf{\Phi}\| = \lambda v_r$ decays slower than the density $\lambda$ at the wavefront. Consequently, for a properly calibrated $\delta_{\mathrm{th}}$, the condition $\mathcal{G}_{\mathrm{flux}}(t) > 0$ holds.
\end{Definition}

\begin{remark}
Physically, $\mathcal{G}_{\mathrm{flux}}$ represents the width of the proactive guard ring. From a control theory perspective, this metric represents a space-time conversion: the strategy trades spatial redundancy in the form of an extra active ring of width $\mathcal{G}_{\mathrm{flux}}$ for temporal margin. This spatial advance compensates for the inevitable time delay $\tau_{\mathrm{boot}}$ inherent in hardware state transitions.
\end{remark}

To ensure uninterrupted connectivity for the leading UAVs, the spatial gain must outpace the physical movement of the swarm during the wake-up process. This leads to the following fundamental condition for reliability.

\begin{Theorem}
\label{theorem:zero_latency} 
The wavefront outage is theoretically eliminated if and only if the flux gain satisfies:
\begin{equation} 
	\mathcal{G}_{\mathrm{flux}}(t) \ge \int_{t}^{t+\tau_{\mathrm{boot}}} v_{\mathrm{wf}}(\tau) \, \mathrm{d}\tau \approx v_{\mathrm{wf}}(t) \cdot \tau_{\mathrm{boot}},
	\label{eq:zero_latency_condition}
\end{equation}
where $v_{\mathrm{wf}}(t)$ is the wavefront phase velocity defined in \eqref{eq:vwf_def}. This inequality establishes the fundamental geometric constraint for the flux-aware activation boundary.
\end{Theorem}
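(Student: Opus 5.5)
The argument is a timing (kinematic) one. We fix a decision instant $t$ in the expansion phase and follow a GBS at radius $r$ that is switched on at $t$ by the proactive rule \eqref{eq:activation_rule}. By the hardware constraint of Section \ref{sec:system_model_A}, this node only serves from $t+\tau_{\mathrm{boot}}$ onward. The effective service boundary available at $t+\tau_{\mathrm{boot}}$ is therefore the radius commanded at $t$, i.e.\ $R_{\mathrm{active}}(t)=\max\{R_{\mathrm{act}}(t),R_{\mathrm{flux}}(t)\}$ from Proposition \ref{prop:activation_radius}. We assume $\mathcal{G}_{\mathrm{flux}}>0$, so this equals $R_{\mathrm{flux}}(t)=R_{\mathrm{act}}(t)+\mathcal{G}_{\mathrm{flux}}(t)$. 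By the definition \eqref{eq:outage_metric}, the outage at $t+\tau_{\mathrm{boot}}$ vanishes exactly when the numerator integral over $[R_{\mathrm{active}},R_{\mathrm{wf}}(t+\tau_{\mathrm{boot}})]$ is nonpositive. Since $\lambda>0$, this holds iff $R_{\mathrm{active}}\ge R_{\mathrm{wf}}(t+\tau_{\mathrm{boot}})$. This equivalence supplies both directions of the ``if and only if''. The remaining task is to express $R_{\mathrm{wf}}(t+\tau_{\mathrm{boot}})$ through $v_{\mathrm{wf}}$.

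Next we write $R_{\mathrm{wf}}(t+\tau_{\mathrm{boot}})=R_{\mathrm{wf}}(t)+\int_t^{t+\tau_{\mathrm{boot}}}v_{\mathrm{wf}}(\tau)\,\mathrm{d}\tau$, using \eqref{eq:vwf_def} and the fundamental theorem of calculus. Differentiability of $R_{\mathrm{wf}}$ follows from the explicit form $R_{\mathrm{wf}}=\sigma\sqrt{2\ln(N/(2\pi\sigma^2\lambda_{\mathrm{th}}))}$, which is smooth wherever the logarithm is positive. To match the statement, we identify the reactive boundary with the current physical wavefront, i.e.\ we take $\lambda_{\mathrm{act}}=\lambda_{\mathrm{th}}$, so that $R_{\mathrm{act}}(t)=R_{\mathrm{wf}}(t)$. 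This is the normalization the paper implicitly uses when it calls $\mathcal{G}_{\mathrm{flux}}$ the lead over the density boundary. With this choice, the condition $R_{\mathrm{act}}+\mathcal{G}_{\mathrm{flux}}\ge R_{\mathrm{wf}}(t)+\int v_{\mathrm{wf}}$ reduces exactly to \eqref{eq:zero_latency_condition}.

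The approximation $\approx v_{\mathrm{wf}}(t)\tau_{\mathrm{boot}}$ then follows from the adiabatic time-scale separation of Section \ref{S3.3}. Because $\tau_{\mathrm{boot}}\ll T_{\mathrm{hydro}}$, a first-order Taylor expansion gives an error of $O(\tau_{\mathrm{boot}}^2\dot v_{\mathrm{wf}})$. This error is negligible relative to $v_{\mathrm{wf}}\tau_{\mathrm{boot}}$ since $|\dot v_{\mathrm{wf}}/v_{\mathrm{wf}}|\sim T_{\mathrm{hydro}}^{-1}$.

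The main obstacle is making the ``only if'' rigorous. Two issues arise. First, the activation region must be contiguous; we take this from the outage definition's assumption together with the monotone decay of $\lambda$ and of $\|\mathbf{\Phi}\|$ beyond its peak, which makes the supremum in \eqref{eq:R_flux_def} the outer edge of a connected disk. Second, the theorem must be read as holding for every $t\in\mathbb{T}_{\mathrm{exp}}$ with the decision taken at the latest admissible instant $t$. If the condition fails at some $t$, the strict inequality $R_{\mathrm{active}}(t)<R_{\mathrm{wf}}(t+\tau_{\mathrm{boot}})$ leaves a ring of positive Gaussian mass, so $\mathcal{O}(t+\tau_{\mathrm{boot}})>0$. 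If the equation $\lambda=\lambda_{\mathrm{th}}$ has no root, we can dispatch that edge case by noting that then $R_{\mathrm{wf}}=0$ and the outage is trivially zero.
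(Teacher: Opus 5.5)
Your proposal is correct and follows the same route as the paper. The paper gives no formal proof of Theorem~\ref{theorem:zero_latency}, only the kinematic remark that the guard ring must outpace the wavefront's displacement during the wake-up window. That remark rests on the pure-delay model $R_{\mathrm{active}}(t+\tau_{\mathrm{boot}})=R_{\mathrm{trig}}(t)$ used in Section~\ref{sec:sim_outage} and on the tacit identification $R_{\mathrm{act}}\equiv R_{\mathrm{wf}}$ built into comparing $\mathcal{G}_{\mathrm{flux}}$ with the wavefront displacement, and you make both explicit.

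There are two minor slips. First, the first-order step needs $\tau_{\mathrm{boot}}\ll T_{\mathrm{hydro}}$, which is a separate assumption and does not follow from the adiabatic separation of Section~\ref{S3.3}, since that concerns $T_{\mathrm{pkt}}$. Second, contiguity must come from the paper's stated assumption alone: monotone decay of $\|\mathbf{\Phi}\|$ only makes the flux-triggered set an annulus, and its inner edge need not lie inside $R_{\mathrm{act}}(t)$.
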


	\begin{remark}
		It is crucial to note that satisfying the kinematic zero-latency condition in Theorem \ref{theorem:zero_latency} is a necessary but not sufficient condition for successful task offloading. The holistic system reliability is governed by a two-tier hierarchy. 1) Hardware Availability: The GBS must be active upon the UAV's arrival, which is addressed by the flux-aware guard ring $\mathcal{G}_{\mathrm{flux}}$; and 2) Link Reliability: The instantaneous SINR must exceed the target threshold $\gamma$. In our framework, this second tier is rigorously captured by the finite network coverage probability $\mathsf{P}_{\mathrm{cov}}$ derived in Theorem \ref{thm:inst_local_cov}. Consequently, the overall served traffic $\mathcal{T}_{\mathrm{total}}$ defined in \eqref{eq:T_total_def} explicitly integrates both the kinematic availability and the SINR-based coverage probability, ensuring that offloading failures due to signal degradation are inherently penalized.
	\end{remark}

The condition established in Theorem \ref{theorem:zero_latency} imposes a fundamental geometric constraint on the activation boundary. To translate this theoretical kinematic requirement into practical system design, we derive the upper bound for the flux sensitivity threshold.

\begin{Proposition}
\label{prop:threshold_calibration} 
To ensure zero-latency coverage given a service setup constraint $\tau_{\mathrm{boot}}$, the flux sensitivity threshold $\delta_{\mathrm{th}}$ must be calibrated to satisfy:
\begin{equation} 
	\delta_{\mathrm{th}} \le \inf_{t \in \mathbb{T}_{\mathrm{exp}}} \left\| \mathbf{\Phi}\left( R_{\mathrm{act}}(t) + v_{\mathrm{wf}}(t)\tau_{\mathrm{boot}}, t \right) \right\|.
	\label{eq:critical_threshold}
\end{equation}
This inequality serves as a design criterion: it quantifies the minimum sensitivity required to detect the precursor momentum at the specific look-ahead distance $d_{\mathrm{lead}} = v_{\mathrm{wf}} \tau_{\mathrm{boot}}$.
\end{Proposition}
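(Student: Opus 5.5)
We plan to derive Proposition~\ref{prop:threshold_calibration} directly from Theorem~\ref{theorem:zero_latency}, using the definition of $R_{\mathrm{flux}}(t)$ in Proposition~\ref{prop:activation_radius}. First, we set the target radius $r^\star(t) \triangleq R_{\mathrm{act}}(t) + v_{\mathrm{wf}}(t)\tau_{\mathrm{boot}}$. In its first-order form, the zero-latency condition \eqref{eq:zero_latency_condition} reads $\mathcal{G}_{\mathrm{flux}}(t)\ge v_{\mathrm{wf}}(t)\tau_{\mathrm{boot}}$. Since $\mathcal{G}_{\mathrm{flux}} = R_{\mathrm{flux}} - R_{\mathrm{act}}$, this condition is exactly $R_{\mathrm{flux}}(t)\ge r^\star(t)$ for every $t\in\mathbb{T}_{\mathrm{exp}}$. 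The task is therefore to translate this geometric requirement into a condition on the threshold $\delta_{\mathrm{th}}$.

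Second, we use the radial profile of the flux magnitude. By Definition~\ref{def:flux} and \eqref{eq:gaussian_intensity},
\begin{equation*}
\|\mathbf{\Phi}(r,t)\| = \frac{N(t)\dot\sigma(t)}{2\pi}\,\frac{r}{\sigma^3(t)}\,e^{-r^2/(2\sigma^2(t))}.
\end{equation*}
For fixed $t$ in the expansion phase, this is a function $g_t(r)$ with the following properties:
\begin{itemize}
\item it vanishes at $r=0$;
\item it increases on $[0,\sigma(t)]$;
\item it decreases strictly to $0$ on $[\sigma(t),\infty)$.
\end{itemize}
The $\sup$ in \eqref{eq:R_flux_def} therefore picks the unique root of $g_t(r)=\delta_{\mathrm{th}}$ on the decreasing branch. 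On that branch $\{r\ge\sigma: g_t(r)\ge\delta_{\mathrm{th}}\} = [\sigma, R_{\mathrm{flux}}(t)]$, so for $r^\star\ge\sigma$ we have the equivalence $R_{\mathrm{flux}}(t)\ge r^\star(t) \iff g_t(r^\star(t))\ge \delta_{\mathrm{th}}$. This is the monotonicity step, and it is the one genuinely nontrivial point.

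Third, we require this equivalence for all $t\in\mathbb{T}_{\mathrm{exp}}$. This gives $\delta_{\mathrm{th}}\le \|\mathbf{\Phi}(r^\star(t),t)\|$ for every such $t$, which is equivalent to \eqref{eq:critical_threshold} after taking the infimum over $t$.

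The main obstacle is justifying that $r^\star(t)$ lies on the decreasing branch, i.e.\ $r^\star(t)\ge\sigma(t)$. Otherwise the $\sup$-definition and the peak of $g_t$ break the equivalence. If $r^\star<\sigma$, then $g_t(r^\star)\ge\delta_{\mathrm{th}}$ still implies $R_{\mathrm{flux}}\ge r^\star$, because the sup exceeds any point where $g_t\ge\delta_{\mathrm{th}}$. Hence sufficiency holds unconditionally, and only necessity needs the branch assumption. We would first try to show $R_{\mathrm{act}}(t)\ge\sigma(t)$ whenever $\lambda_{\mathrm{act}}\le N/(2\pi\sigma^2 e^{1/2})$, using \eqref{eq:R_act_def}. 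This covers the relevant wavefront regime $r\gg\sigma$ invoked in the remark after Lemma~\ref{lemma:flux_ratio}. If that fails, we would state the proposition as a sufficient design criterion, which is all the word ``must'' operationally needs. We would also briefly note the degenerate case in which the $[\cdot]^+$ in \eqref{eq:R_act_def} clips $R_{\mathrm{act}}$ to $0$, and handle it in the same way.
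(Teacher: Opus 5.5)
Your proposal is correct and follows essentially the same route as the paper. Both set $R_{\mathrm{req}} = R_{\mathrm{act}} + v_{\mathrm{wf}}\tau_{\mathrm{boot}}$ from the first-order form of Theorem~\ref{theorem:zero_latency}, use that $\|\mathbf{\Phi}(r,t)\|$ is decreasing for $r>\sigma(t)$ to turn $R_{\mathrm{flux}}\ge R_{\mathrm{req}}$ into $\delta_{\mathrm{th}}\le\|\mathbf{\Phi}(R_{\mathrm{req}},t)\|$, and then take the infimum over $\mathbb{T}_{\mathrm{exp}}$. You are somewhat more careful than the paper, which simply assumes the wavefront regime $r>\sigma$: you note that sufficiency holds unconditionally (by the intermediate value theorem on the decaying tail), and that necessity requires $R_{\mathrm{req}}(t)\ge\sigma(t)$, which holds for instance when $\lambda_{\mathrm{act}}\le N/(2\pi\sigma^2 e^{1/2})$.
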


\begin{proof}
From Theorem \ref{theorem:zero_latency}, the zero-latency requirement is $R_{\textrm{flux}} \ge R_{\textrm{act}} + v_{\textrm{wf}} \tau_{\textrm{boot}}$. Let the right-hand side be the required target radius $R_{\textrm{req}} \triangleq R_{\textrm{act}} + v_{\textrm{wf}} \tau_{\textrm{boot}}$. Recall that $R_{\textrm{flux}}$ is the outer root of $\|\mathbf{\Phi}(r)\| = \delta_{\textrm{th}}$. In the wavefront region where $r > \sigma$, the flux magnitude $\|\mathbf{\Phi}(r)\|$ is a monotonically decreasing function of $r$. Therefore, to ensure the trigger radius $R_{\textrm{flux}}$ extends beyond $R_{\textrm{req}}$, the triggering threshold $\delta_{\textrm{th}}$ must be lower than the flux intensity at $R_{\textrm{req}}$:
\begin{equation} 
	\delta_{\textrm{th}} = \|\mathbf{\Phi}(R_{\textrm{flux}})\| \le \|\mathbf{\Phi}(R_{\textrm{req}})\|.
\end{equation}
Substituting the definition of $\mathbf{\Phi}$ yields the bound in \eqref{eq:critical_threshold}. This completes the proof.
\end{proof}

\section{Numerical Results and Analysis}\label{sec:numerical_results} 

We validate the analytical framework using Monte Carlo simulations averaged over 50,000 iterations within a service area $\mathcal{A}$ of $20 \times 20$ km$^2$ governed by the pulsating workload model. 
	The default system parameters are listed in Table \ref{tab:sim_params}. To ensure the realism of the evaluation, these parameters are selected based on a combination of standardization reports and practical hardware specifications. Specifically, the communication parameters, including the UAV operational altitude ($150$ m) and path loss exponent ($\nu=2.5$), are aligned with the 3GPP technical report for low-altitude aerial vehicles (TR 36.777) \cite{3gpp_tr36777}. The hardware profiles, such as the active/sleep power consumption and the service setup latency ($\tau_{\mathrm{boot}} = 5$ min), reflect the typical specifications of MEC-enabled micro BSs. This service setup latency accounts for the aggregate delay of the hardware transitioning from a deep sleep mode \cite{Onireti2017} and the software overhead of MEC container cold-starts \cite{mec_latency_ref}. Furthermore, the macroscopic traffic variables ($N_0$ and $\sigma_0$) are configured as design assumptions to represent a city-scale logistics hub serving a metropolitan area.

\begin{table}[!b]
	\vspace*{-5mm}
	\scriptsize
	\caption{Default Simulation Parameters}
	\label{tab:sim_params} 
	\centering
	\vspace*{-3mm}
	\begin{tabular}{l|c|c}
		\toprule
		\textbf{Parameter} & \textbf{Symbol} & \textbf{Value} \\
		\midrule
		GBS Density (LAS Layer) & $\lambda_{\textrm{BS}}$ & $5 \text{ BSs/km}^2$ \\
		UAV Flight Altitude & $H_{\textrm{UAV}}$ & $150 \text{ m}$ \\
		Bandwidth per LAS & $B$ & $20 \text{ MHz}$ \\
		Path Loss Exponent & $\nu$ & $2.5$ \\
		Target SINR Threshold & $\gamma$ & $-5 \text{ dB}$ \\ 
		Target Spectral Efficiency & $\eta_{\rm SE}$ & $2 \text{ bps/Hz}$ \\
		Baseline Workload Intensity & $N_0$ & $25,000$ UAVs \\
		Baseline Spatial Spread & $\sigma_0$ & $4.5 \text{ km}$ \\
		Load Modulation Index & $\delta_N$ & $0.8$ \\
		Spatial Expansion Index & $\delta_\sigma$ & $0.7$ \\ 
		Active Power (LAS) & $P_{\textrm{act}}$ & $400 \text{ W}$ \\
		Sleep Power (LAS) & $P_{\textrm{slp}}$ & $50 \text{ W}$ \\
		Service Setup Latency & $\tau_{\textrm{boot}}$ & $300 \text{ s}$ ($5 \text{ min}$) \\
		Flux Sensitivity Threshold & $\delta_{\textrm{th}}$ & \textit{Variable} \\
		Static Activation Threshold & $\lambda_{\textrm{act}}$ & $ 50 \text{ agents/km}^2$ \\ 
		Holding Threshold & $\lambda_{\textrm{hold}}$ & $ 2 \text{ agents/km}^2$ \\ 
		\bottomrule
	\end{tabular}
	\vspace*{-1mm}
\end{table}

 To assess the system-level gains, we compare the proposed flux-aware asymmetric strategy against the following four baseline strategies.
	\begin{enumerate}
		\item \textit{Static Always-On:} All infrastructure modules remain active continuously. This provides an upper theoretical bound on service reliability but incurs the maximum possible energy cost.
		\item \textit{Reactive Strategy:} BSs activate strictly upon detecting local user density $\lambda \ge \lambda_{\textrm{act}}$. This serves as the fundamental baseline for analyzing boot-up latency and resultant wavefront outages.
		\item \textit{Robust Margin Strategy (RMS):} Adapted from robust optimization methods based on statistical moments and chance constraints \cite{teng2021joint}, this baseline adds a conservative spatial safety margin outside the density boundary to hedge against mobility fluctuations. In our implementation, this margin is instantiated as a fixed $8.5$ km worst-case guard ring.
		\item \textit{Traffic Snapshot Prediction (TSP) Baseline:} Inspired by learning-based traffic prediction models (e.g., LSTM \cite{wang2024base, jang2020base, azari2022energy} and DRL \cite{wu2021deep}), this serves as an oracle baseline. It assumes perfect foresight of the future static density boundary to fully cancel the temporal boot-up latency, activating BSs ahead of time based strictly on the predicted spatial density snapshots.
	\end{enumerate}

The analysis proceeds in four steps: validating the analytical coverage model and visualizing flow dynamics (Section \ref{sec:sim_visualization_setup}), validating the spatial phase-lead of the information flux (Section \ref{sec:sim_phase_lead}), quantifying the reliability gain in eliminating outages (Section \ref{sec:sim_outage}), and evaluating the overall network EE (Section \ref{sec:sim_ee}).

\subsection{Model Visualization and Analytical Validation}\label{sec:sim_visualization_setup} 

\begin{figure}[!t]
	\centering
	\vspace*{-1mm}
	\includegraphics[width=0.9\columnwidth]{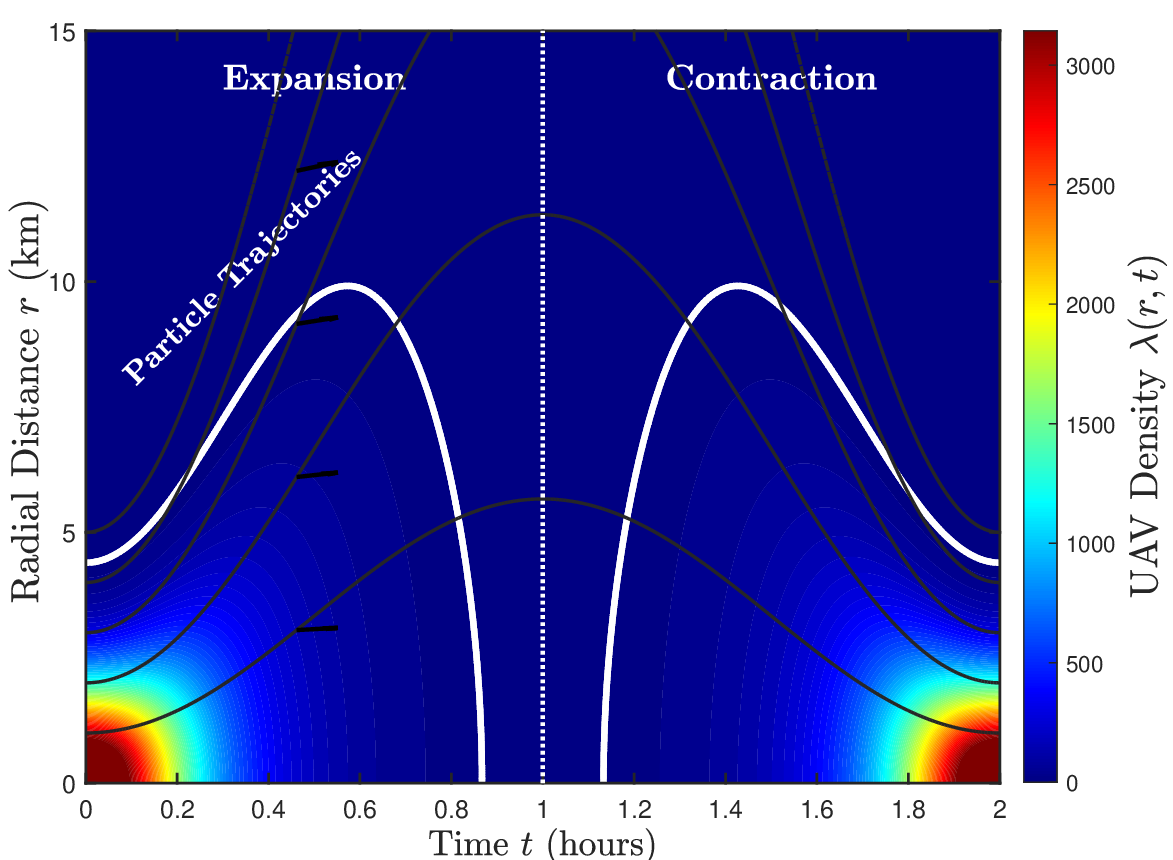} 
	\vspace*{-3.5mm}
	\caption{Spatiotemporal evolution of the pulsating logistics tide.}
	\vspace*{-4mm}
	\label{fig:spatiotemporal_flow} 
\end{figure}

\begin{figure}[!b]
	\vspace*{-4mm}
	\centering
	\includegraphics[width=0.9\columnwidth]{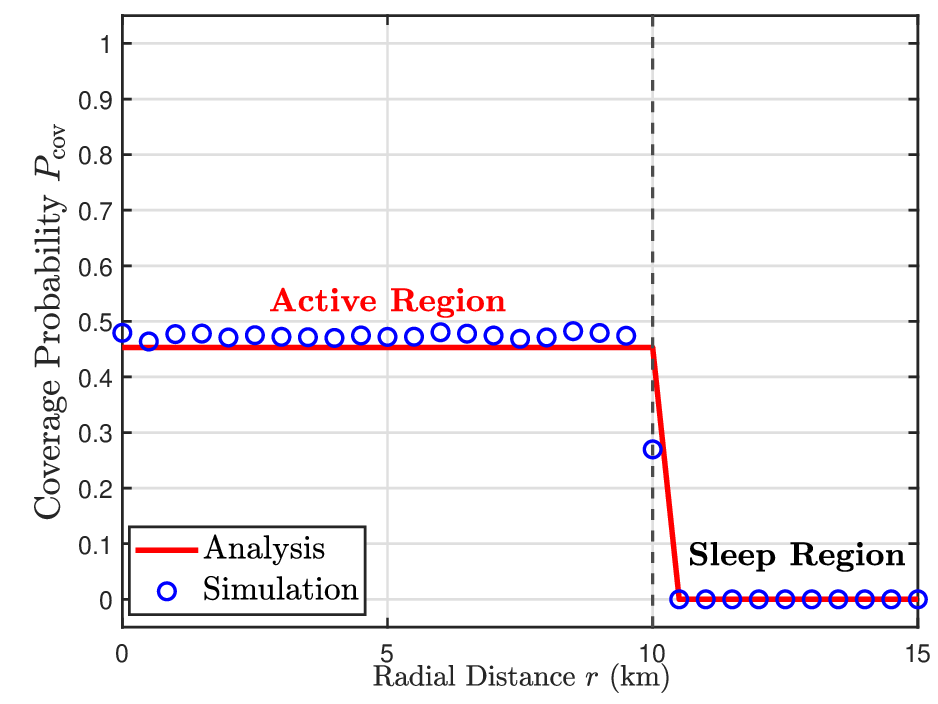} 
	\vspace*{-5mm}
	\caption{Validation of the analytical local coverage probability.}
	\label{fig:theorem_validation} 
	\vspace*{-1mm}
\end{figure}

\subsubsection{Macroscopic Dynamics Visualization}
To verify the kinematic basis of the proposed framework, Fig. \ref{fig:spatiotemporal_flow} illustrates the spatio-temporal evolution of the workload density field $\lambda(r,t)$ alongside the dynamic logistic wavefront $R_{\textrm{wf}}(t)$ represented by a white contour. By superimposing the Lagrangian characteristic trajectories of hypothetical fluid particles (plotted as thin grey lines), we observe distinct phase-dependent behaviors: during the expansion phase where $t < 1$ h, the grey trajectories diverge rapidly, corroborating that the wavefront propagation is physically driven by the macroscopic advection velocity $v_r > 0$; conversely, during the contraction phase where $t > 1$ h, the grey trajectories converge, reflecting the inward momentum as the swarm returns to the hub. This confirms that the derived information flux metric provides an accurate kinematic representation of the swarm’s momentum.

\begin{figure}[!t]
	\vspace*{-1mm}
	\centering
	\includegraphics[width=0.9\columnwidth]{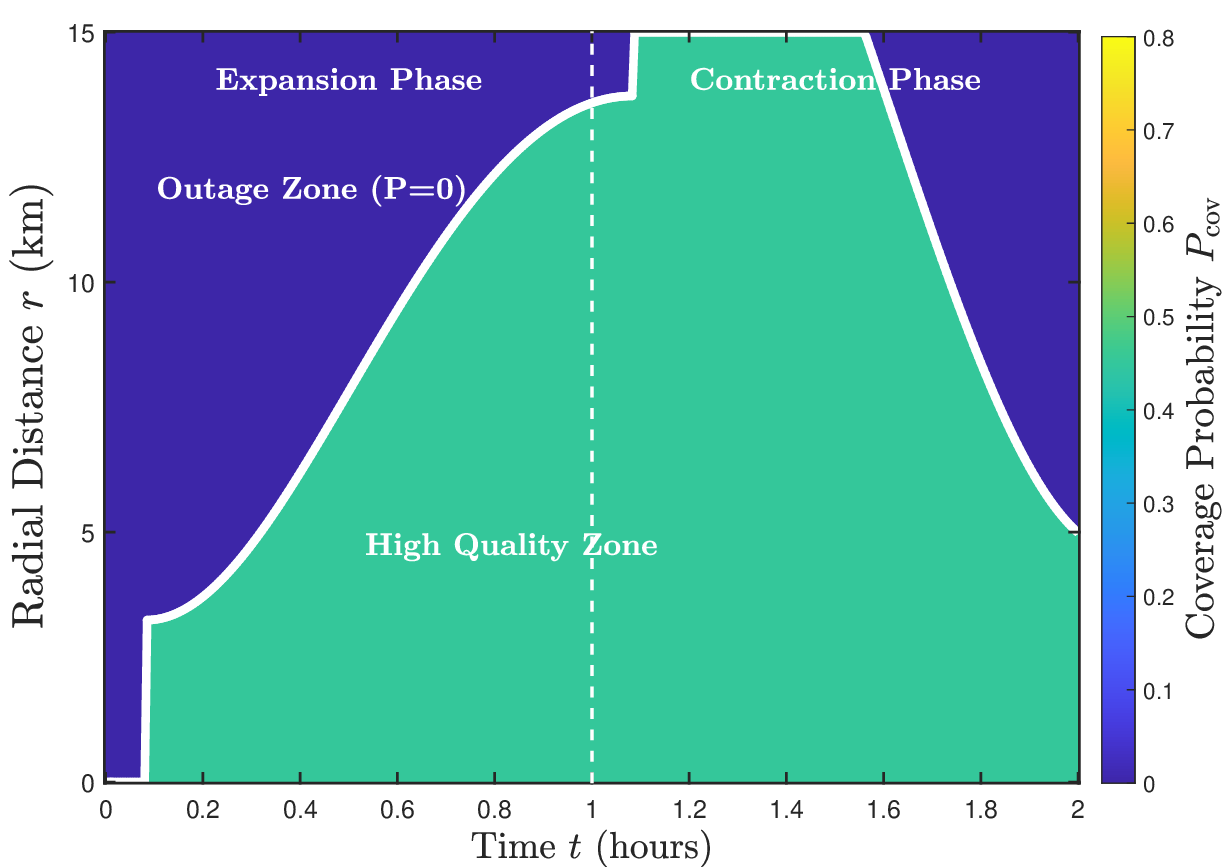} 
	\vspace*{-4mm}
	\caption{Spatiotemporal evolution of the coverage probability over a full expansion-contraction cycle.}
	\label{fig:coverage_heatmap} 
	\vspace*{-6mm}
\end{figure}

\subsubsection{Validation of Analytical Expressions}

We validate the finite network coverage model established in Theorem \ref{thm:inst_local_cov} against Monte Carlo simulations. Fig. \ref{fig:theorem_validation} presents the spatial coverage profile at a representative snapshot of $R_{\textrm{active}} = 10$ km, corresponding to the fully expanded service boundary, with path loss exponent $\nu=3.0$ and SINR threshold $\gamma=-3$ dB. Notably, the distinct boundary-induced coverage gain observed near the periphery ($r \approx 10$ km) is precisely predicted by the proposed analysis. This validation confirms that the finite interference factor derived in \eqref{eq:F_finite} correctly quantifies the reduction in aggregate interference due to the sleep mode of GBSs outside the active region, thereby eliminating the estimation error at the wavefront.

Fig. \ref{fig:coverage_heatmap} extends this validation to the temporal domain over a full expansion-contraction cycle. The heatmap reveals a traveling service plateau, visualized as the green region, which is strictly confined within the analytical activation boundary traced by the white curve. 
The precise alignment between the analytical boundary and the effective service zone confirms the robustness of the derived model under time-varying topologies. Even as the network size undergoes substantial fluctuations, the analytical framework accurately tracks the spatiotemporal evolution of the SINR field, demonstrating that the proposed fluid-based stochastic geometry offers a rigorous performance baseline for dynamic logistics corridors.

\begin{figure}[!b]
\vspace*{-5mm}
	\centering
	\includegraphics[width=0.9\columnwidth]{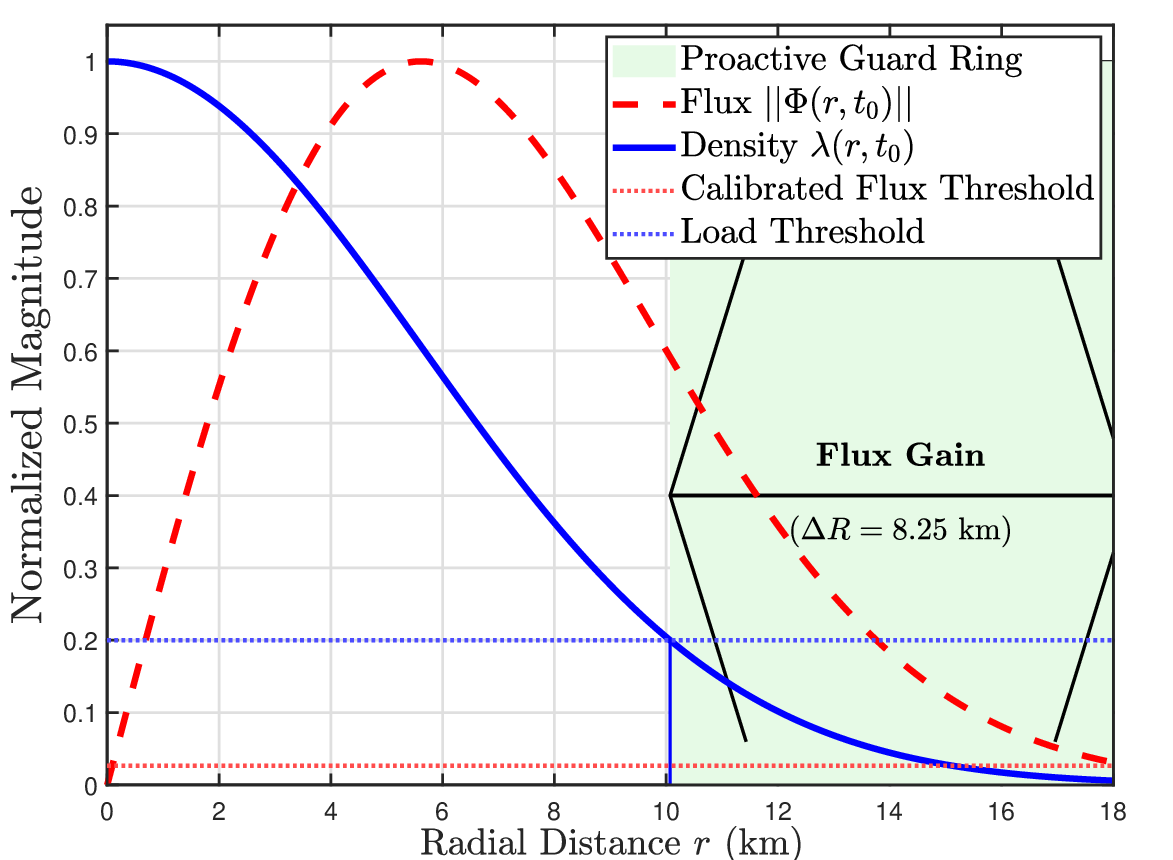}
	\vspace*{-3.7mm}
	\caption{Spatial mechanism of the flux-aware strategy at $t_0=0.6$ h, showing the proactive guard ring.}
	\vspace*{-1mm}
	\label{fig:spatial_mechanism} 
\end{figure}

\subsection{Validation of Kinematic Mechanism: Flux Phase-Lead}\label{sec:sim_phase_lead} 

We first validate the kinematic basis of the proposed framework, specifically the spatial and temporal phase-lead of the information flux $\mathbf{\Phi}$. The simulation focuses on the rapid expansion phase, as the resulting macroscopic advection velocity is the physical driver of the flux signal.

\subsubsection{Spatial Mechanism: Proactive Guard Ring}
Fig. \ref{fig:spatial_mechanism} elucidates the spatial interplay between the normalized agent density $\lambda(r, t_0)$ and the flux magnitude $\|\mathbf{\Phi}(r, t_0)\|$ at a representative snapshot $t_0\! =\! 0.6$\,h. The results reveal a fundamental limitation of conventional detection methods. As shown by the blue solid curve, the agent density decays rapidly at the network periphery. This metric is strictly coupled to the physical presence of agents: the density threshold $\lambda_{\textrm{act}}$ is crossed only upon the swarm's arrival. Consequently, any strategy relying solely on density fails to anticipate the load, making the service setup latency $\tau_{\textrm{boot}}$ an unavoidable penalty.

In contrast, the flux magnitude represented by the red dashed curve exhibits a distinct wavefront enveloping effect. Driven by the high macroscopic velocity $v_r\! \propto\! r$ in the far field, the flux signal remains significant even in regions where the agent density is negligible, specifically at distances exceeding 10\,km. This kinematic property effectively decouples the trigger signal from the physical load. As a result, the GBSs are activated by the flux condition $\|\mathbf{\Phi}\| \ge \delta_{\textrm{th}}$ well before the density condition is met. This predictive mechanism generates a substantial proactive guard ring of width $\Delta R \approx 8.25$\,km, buying the necessary time for the infrastructure to wake up before the agents actually arrive.

\begin{figure}[!b]
	\vspace*{-4mm}
	\centering
	\includegraphics[width=0.92\columnwidth]{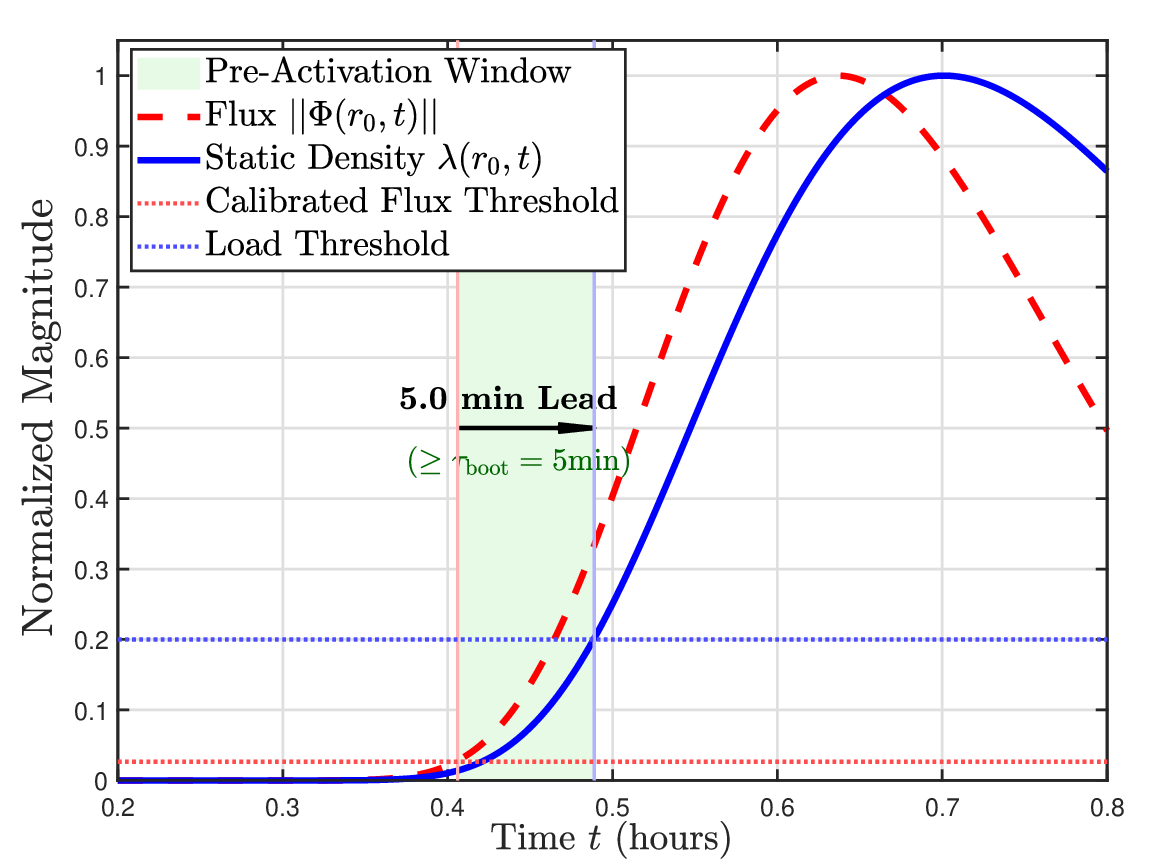} 
		\vspace*{-3mm}
	\caption{Temporal validation of zero-latency triggering at the network edge ($r=14.5$ km).}
		\vspace*{-1mm}
	\label{fig:temporal_validation} 
\end{figure}

\subsubsection{Temporal Validation: Zero-Latency Triggering}
The translation of this spatial gain into an effective temporal lead is verified in Fig. \ref{fig:temporal_validation}, which traces the metric evolution at a representative far-field location located at a radial distance of 14.5\,km. As the logistics tide propagates outward, the flux curve exhibits a sharp rise well in advance of the density curve. The calibrated trigger activates at time $t_{\textrm{flux}}$, which precedes the density arrival time $t_{\textrm{den}}$ by approximately 5 minutes. This interval closely approximates the mandatory service setup latency $\tau_{\textrm{boot}}$, confirming that the information flux functions as a robust precursor signal to ensure the GBS completes its activation sequence just as the physical load arrives.

\begin{figure}[!b]
\vspace*{-5mm}
	\centering
	\includegraphics[width=1\columnwidth]{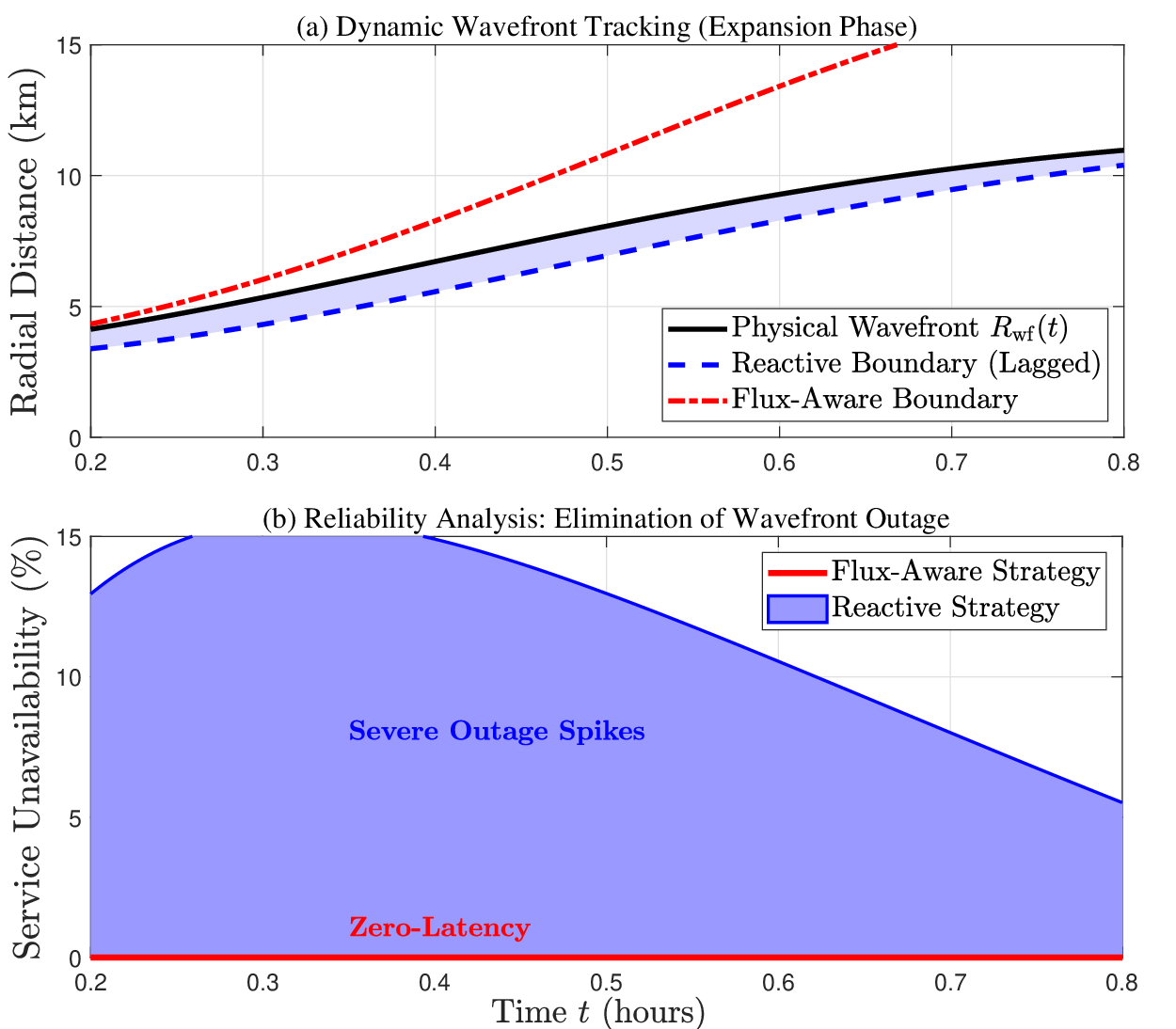}
	\vspace*{-7mm}
	\caption{Reliability analysis during the expansion phase. (a) Service-boundary trajectory versus the physical wavefront. (b) Instantaneous service unavailability.}
	\label{fig:outage_analysis} 
\vspace*{-1mm}
\end{figure}

\subsection{Reliability Analysis: Elimination of Wavefront Outage}\label{sec:sim_outage} 

Having verified the kinematic phase-lead mechanism, we now quantify the system-level reliability gain during the critical expansion phase spanning the interval $t\! \in\! [0.2, 0.8]$\,h. The primary objective is to demonstrate how the proposed strategy mitigates service outages caused by a service setup latency of $\tau_{\textrm{boot}} = 5$ min. We evaluate reliability using the service unavailability metric $\mathcal{O}(t)$ defined in  \eqref{eq:outage_metric}.

The effective service boundary is physically constrained by the activation latency, formulated as $R_{\textrm{active}}(t) = R_{\textrm{trig}}(t - \tau_{\textrm{boot}})$. For the reactive strategy, the trigger follows the static density wavefront where $R_{\textrm{trig}} = R_{\textrm{act}}$. Conversely, for the flux-aware strategy, we employ the robust calibration derived in Proposition \ref{prop:threshold_calibration} by setting $\delta_{\textrm{th}} \approx 0.2 \cdot \delta_{\textrm{max}}$ to create a sufficient safety margin against the expansion velocity.

Fig. \ref{fig:outage_analysis} visualizes the dynamic interaction between the traffic wavefront and the service boundary. The upper panel reveals a distinct tracking disparity where the reactive strategy, plotted as a blue dashed line, consistently lags behind the physical wavefront. This hysteresis creates a widening outage zone of approximately 1--2 km, confirming that static density detection is fundamentally too slow for high-mobility logistics corridors. In sharp contrast, the flux-aware strategy represented by the red dashed line maintains a service boundary that strictly envelopes the physical wavefront, effectively predicting the expansion trajectory.

The consequences are quantified in the lower panel. The reactive scheme suffers from a persistent service unavailability peaking at approximately 15\%. This value represents the complete disconnection of the leading swarm, the most strategic assets executing long-range missions. The proposed strategy reduces this outage metric to near zero, validating the effectiveness of flux-based proactive control in eliminating wavefront outages.

\subsection{System-Level Energy Efficiency and Pareto Optimization}\label{sec:sim_ee} 

Finally, we evaluate the comprehensive system performance to validate the economic viability of the proposed framework.

\subsubsection{Comparison of Raw versus Effective Energy Efficiency}
Fig. \ref{fig:system_performance} presents the normalized performance comparison across the five evaluated strategies. An initial observation is the counter intuitive behavior of the raw EE ($\eta_{\textrm{EE}}$). The reactive strategy and the TSP achieve the highest nominal $\eta_{\textrm{EE}}$. However, this metric is deceptive in high-mobility logistics scenarios. As indicated by the served traffic ratio, the reactive strategy captures only 79.6\% of the computational demand, while TSP, despite perfectly predicting the density boundary, only marginally improves this to 82.8\%. This implies that both strategies fundamentally fail to serve the critical wavefront of the digital tide---specifically, the high-velocity leading UAVs entering new coverage areas before the local density reaches the activation threshold. Consequently, their high raw energy efficiencies are merely artifacts of severe service denial, which is operationally unacceptable for mission-critical logistics.

To resolve this efficiency-reliability trade-off, we analyze the effective EE ($\eta_{\textrm{eff}}$), which explicitly penalizes service outages. Under this QoS-constrained metric, the performance of both the reactive and TSP strategies collapses. To guarantee service availability, the RMS enforces a static worst-case spatial guard ring, successfully elevating the served traffic to approximately 99.0\%. However, this rigid geometric expansion incurs massive energy waste through severe over-activation, causing its $\eta_{\textrm{eff}}$ to degrade to a level comparable with the static always-on upper bound. In contrast, the proposed flux-aware strategy leverages the fluid-dynamic velocity field to dynamically adjust the activation boundary. It seamlessly captures the wavefront momentum, achieving a 99.1\% traffic service rate that is numerically comparable to the always-on benchmark, while promptly shrinking the boundary during swarm contraction to reduce trailing energy waste. By inherently balancing kinematic reliability and dynamic power consumption, the flux-aware framework achieves the maximum effective EE among all the evaluated approaches.

\begin{figure}[!t]
		\vspace*{-2mm}
	\centering
	\includegraphics[width=\columnwidth]{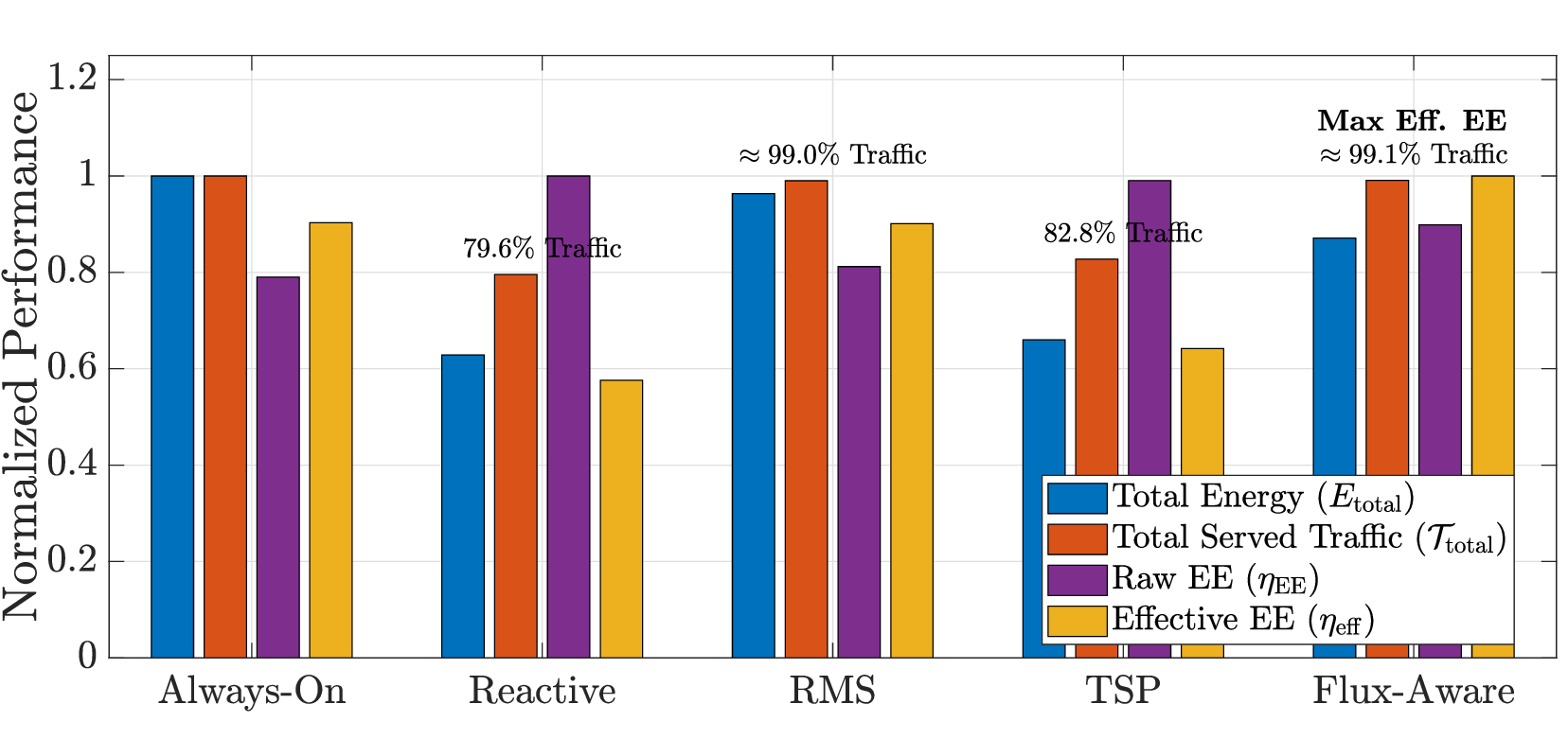} 
	\vspace*{-8mm}
	\caption{System-level performance comparison across benchmark strategies.}
	\label{fig:system_performance} 
	\vspace*{-4mm}
\end{figure}

\begin{figure}[!t]
\vspace*{-1mm}
	\centering
	\includegraphics[width=1\columnwidth]{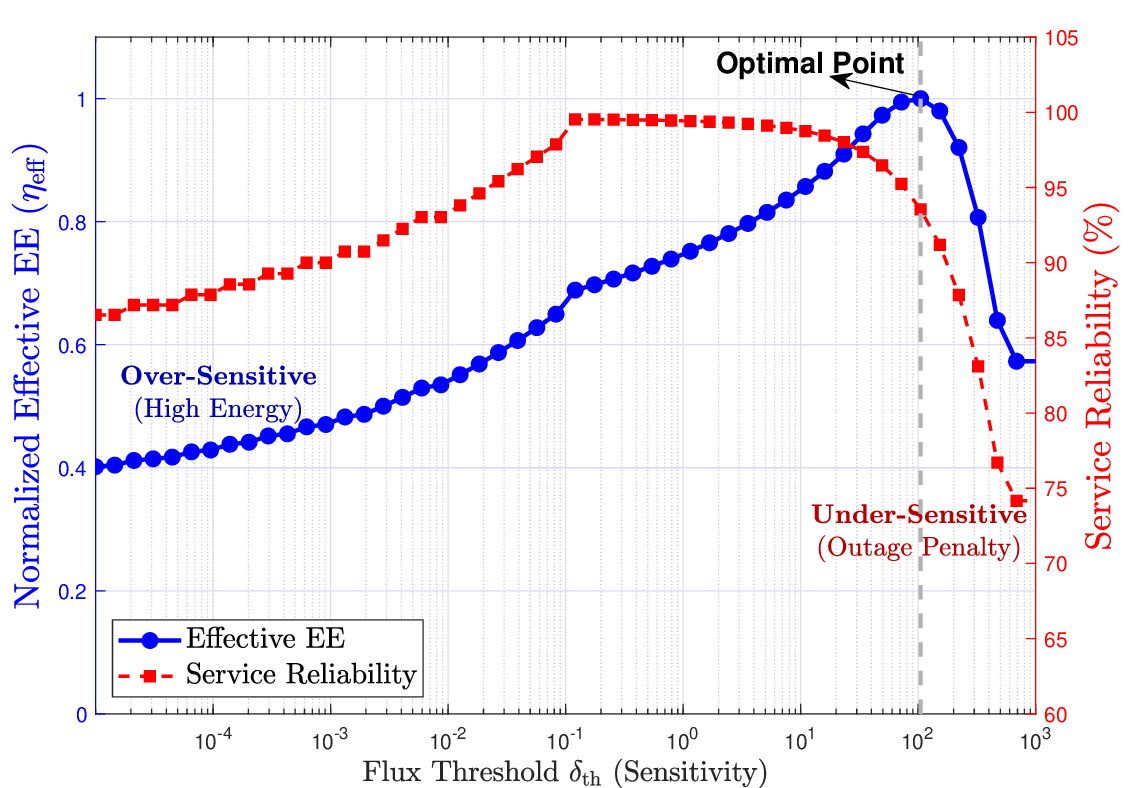}
	\vspace*{-7mm}
	\caption{Impact of the flux sensitivity threshold $\delta_{\textrm{th}}$ on system performance and the resulting Pareto-optimal operating point.}
	\label{fig:ee_optimization} 
	\vspace*{-6mm}
\end{figure}

\subsubsection{Optimal Threshold Calibration}

To validate the existence of a unique optimal operating point, Fig. \ref{fig:ee_optimization} analyzes the sensitivity of system performance to the flux threshold $\delta_{\textrm{th}}$. The effective EE and service reliability curves depicted in Fig.~\ref{fig:ee_optimization} elucidate the fundamental trade-off between energy consumption and service reliability. Specifically, the optimization landscape reveals how flux sensitivity dictates the balance between energy waste and outage risks.

\textit{2.a)~Over-Sensitive Regime ($\delta_{\text{th}}\! <\! 10^{-1}$):} 
Here, the low threshold results in high sensitivity. Both the effective EE and service reliability improve as $\delta_{\text{th}}$ increases. As $\delta_{\text{th}}$ approaches $10^{-1}$, the service reliability reaches approximately 100\% but this comes at a severe energy cost with the effective EE suppressed below 0.7. This is because the system maintains an excessively wide proactive guard ring, activating GBSs far earlier than necessary.
		
\textit{2.b)~Transition Regime ($10^{-1}\! \le\! \delta_{\text{th}}\! <\! 10^{2}$):}
This interval marks the efficiency-gaining phase. As $\delta_{\text{th}}$ increases, the system progressively sheds spatial redundancy, and the effective EE climbs steadily from 0.7 to its peak, while reliability remains largely robust above 95\%. This confirms that the flux trigger still provides sufficient lead time to compensate for $\tau_{\text{boot}}$ while eliminating the waste of an oversized guard ring.
		
\textit{2.c)~Pareto-Optimal Point ($\delta_{\text{th}}\! \approx\! 10^{2}$):} 
The global maximum of the effective EE curve defines the optimal equilibrium. Crucially, this peak efficiency does not correspond to perfect reliability: at the optimal $\delta_{\text{th}}$, service reliability drops from its peak of 100\% to approximately 93\%.
		
\textit{2.d)~Under-Sensitive Regime ($\delta_{\text{th}}\! >\! 10^{2}$):}
Beyond the optimal point, $\delta_{\text{th}}$ becomes excessively high. Physically, the proactive guard ring contracts until its width is insufficient to compensate for the service setup latency $\tau_{\text{boot}}$. In
the asymptotic limit, the flux trigger is effectively disabled, causing the strategy to degenerate into the reactive baseline. Consequently, the network detects the wavefront too late, causing service reliability to plummet below 75\%. The heavy wavefront outage triggers the steep slope of the QoS penalty function $\mathcal{U}_{\beta}$, driving the effective EE to collapse toward the negligible performance floor of the reactive strategy.

This result demonstrates a strategic trade-off: to maximize the effective EE, the system must tolerate a marginal degradation in reliability, corresponding to an approximately 7\% outage risk at the wavefront. This calibration tightens the proactive guard ring to the minimum size necessary to offset hardware latency, thereby locating the true Pareto-optimal frontier that is unattainable by strictly reactive mechanisms.

	\subsection{System Sensitivity and Robustness Analysis}\label{sec:sim_sensitivity} 
	
	\subsubsection{Sensitivity to Service Setup Latency}
	We evaluate the system robustness against service setup latency $\tau_{\mathrm{boot}} \in [1, 10]$ min, representing varying MEC instantiation and RF wake-up overheads. We fix the flux sensitivity threshold to the Pareto-optimal $\delta_{\mathrm{th}}\! =\! 10^2$. As shown in Fig. \ref{fig:sensitivity}\,(a), the reactive baseline's service unavailability scales linearly with $\tau_{\mathrm{boot}}$, approaching 37\% at a 10-minute latency. Conversely, the flux-aware strategy exhibits graceful degradation, restricting the outage to 6.2\% at 5 minutes and below 15\% at 10 minutes. This confirms that the proactive guard ring effectively absorbs the temporal lag. Consequently, as depicted in Fig. \ref{fig:sensitivity}\,(b), the reactive strategy's effective EE plummets due to severe QoS penalties, whereas the proposed framework maintains superior energy efficiency across the latency spectrum, validating its stability for practical deployments.
	
	\begin{figure}[!t]
		\centering
		\includegraphics[width=\columnwidth]{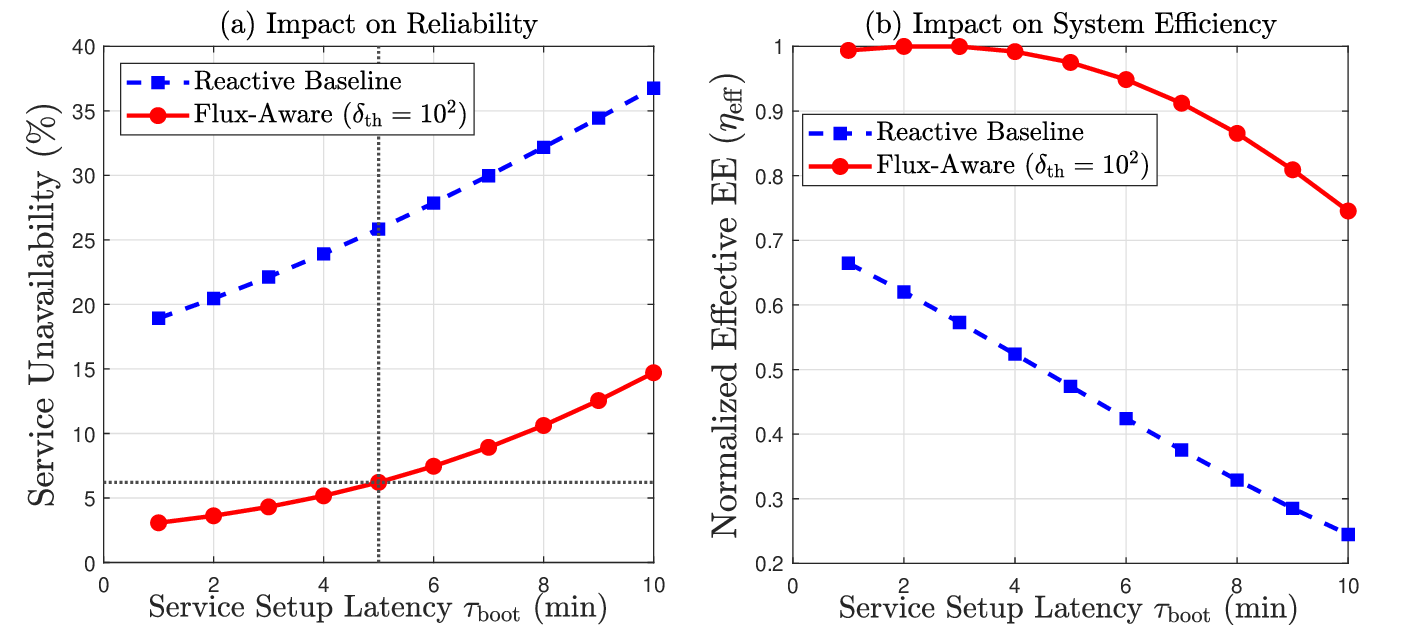}
		\vspace*{-7mm}
		\caption{Sensitivity of system performance to the service setup latency under the Pareto-optimal flux threshold: (a) impact on service reliability, and (b) impact on effective energy efficiency.}
		\label{fig:sensitivity} 
		\vspace*{-4mm}
	\end{figure}
	
	\subsubsection{Robustness Against RF Constraints}
	We further investigate the system sensitivity to underlying RF parameters under five scenarios using the control variable method, as illustrated in Fig. \ref{fig:rf_robustness}. First, fixing $\eta_{\mathrm{SE}}\! =\! 2.0$\,bps/Hz and increasing $\gamma$ from $-5$\,dB to $5$\,dB nonlinearly degrades the coverage probability, compressing the effective EE curves downward. Second, fixing $\gamma\! =\! -5$\,dB and increasing $\eta_{\mathrm{SE}}$ from $2.0$ to $4.0$\,bps/Hz linearly multiplies the spatial capacity, stretching the effective EE curves upward. 
	
	\begin{figure}[!b]
		\vspace*{-6mm}
		\centering
		\includegraphics[width=0.95\columnwidth]{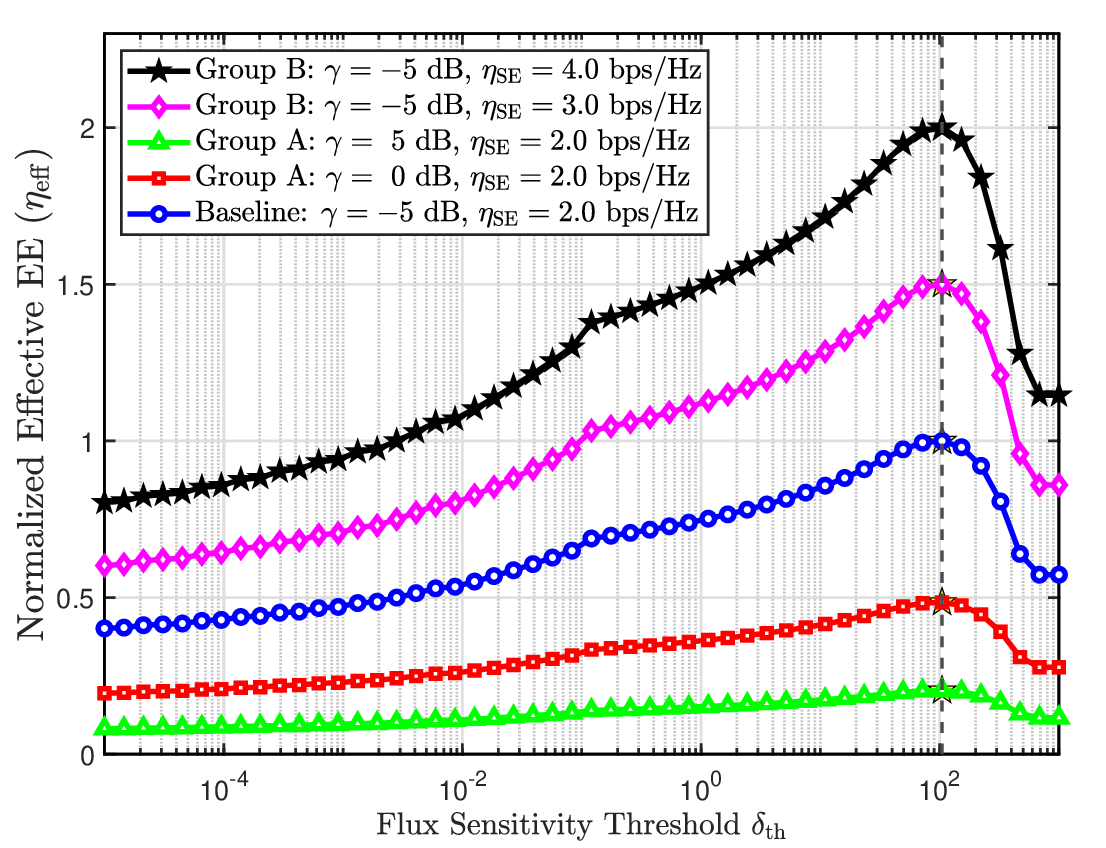}
		\vspace*{-3mm}
		\caption{Normalized effective EE versus the flux sensitivity threshold under varying SINR thresholds and spectral efficiency requirements.}
		\label{fig:rf_robustness} 
		\vspace*{-1mm}
	\end{figure}
	
	Despite these severe vertical fluctuations, the Pareto-optimal flux sensitivity threshold remains stationary at $\delta_{\mathrm{th}} \approx 10^2$ across all scenarios. This invariant optimal point suggests that the required proactive guard ring is primarily governed by the macroscopic expansion velocity and hardware setup latency, while being largely insensitive to the tested microscopic RF link conditions. Consequently, network operators may be able to upgrade modulation and coding schemes without materially recalibrating the kinematic infrastructure triggering logic, although the absolute performance levels may still change.

	\section{Discussion and Practical Limitations}\label{sec:discussion} 
	
	While the proposed fluid-dynamic modeling and flux-aware activation framework establish a tractable methodology for macroscopic infrastructure provisioning, certain idealized mathematical abstractions warrant further discussion regarding their practical limitations under non-ideal conditions.
	
	\subsubsection{Impact of Spatial Deployment Deviations} 
	The theoretical framework evaluates instantaneous coverage probability and energy efficiency under the assumption of a homogeneous PPP, denoted by $\Phi_{\mathrm{BS}}$. In complex urban environments, base station deployments are constrained by topography and zoning laws. Stochastic geometry literature establishes that the complete spatial randomness of a PPP typically yields a conservative lower bound on coverage probability compared to planned repulsive networks. Consequently, if the actual cellular deployment exhibits a more regular grid-like pattern, the variance of the nearest-serving distance decreases, mitigating severe local interference. In such scenarios, the effective service availability improves beyond our analytical predictions. Conversely, extreme clustering of infrastructure may introduce localized coverage holes. However, the core mechanism proposed in this paper, namely, the kinematic phase-lead of the macroscopic information flux over the scalar density, is expected to remain qualitatively robust. This predictive capability is governed by the continuity equation of the fluid dynamics and is decoupled from the microscopic spatial distribution of the ground infrastructure.
	
	\subsubsection{Fluid Approximation Boundaries and Abrupt Reconfiguration} 
Treating the UAV swarm as a continuous compressible fluid is effective under macroscopic high-density conditions with smooth workload flows. However, the validity of this continuous fluid model relies on the cohesive movement of the logistics fleet. Under non-ideal conditions, such as abrupt swarm reconfiguration or chaotic individual trajectories, the macroscopic continuity equation loses its predictive accuracy. In these regimes, the macroscopic flux vector fails to resolve microscopic discontinuities, causing the proactive system performance to degrade toward the reactive baseline. Addressing this limitation necessitates a hybrid methodology combining macroscopic fluid dynamics with discrete multi-agent tracking in future research.
	
	\subsubsection{Impact of Estimation Noise} 
	The current analysis assumes the accurate acquisition of macroscopic state variables. In practical deployments, sensory measurements of the swarm density and velocity fields are subject to estimation noise. Such inaccuracies in the flux estimation can induce localized false alarms or delayed infrastructure activations. Mitigating this vulnerability requires the integration of temporal smoothing techniques or predictive filters, such as Kalman filtering, prior to the flux evaluation to ensure robust activation triggers.
	
	\subsubsection{Heterogeneous Service Startup Delays} 
The theoretical and simulation analyses evaluate system robustness under uniform service setup latencies. Real-world infrastructure often exhibits heterogeneous startup times due to varying hardware capabilities and backhaul states across different cells. Managing such spatial heterogeneity requires extending the current framework by transforming the global scalar flux sensitivity threshold, denoted by $\delta_{\mathrm{th}}$, into a spatially varying field. This adaptation allows the system to dynamically adjust the proactive guard ring on a per-cell basis according to localized hardware constraints.

	\subsubsection{Applicability to Geometry-Constrained Corridors}
	The current analytical evaluation assumes a radially symmetric Gaussian tide. In practice, UAV logistics often follow geometry-constrained flight corridors due to urban topology or airway regulations. Nevertheless, the proposed flux-induced phase-lead mechanism is expected to remain valid under such asymmetric or one-dimensional constrained trajectories. This resilience is fundamentally rooted in the continuity equation. For a 1D delivery corridor, the spatial distribution reduces to a longitudinal profile $\lambda(x,t)$. Driven by the conservation of mass during the expansion phase, the macroscopic advection velocity $v_x$ still scales proportionally with the propagation distance $x$. Consequently, the information flux magnitude $\|\mathbf{\Phi}\| = \lambda \cdot v_x$ inherently decays slower than the scalar density $\lambda$ at the wavefront. Thus, the spatial monotonicity as proven in Lemma \ref{lemma:flux_ratio} mathematically holds, ensuring that the flux trigger continues to generate a proactive temporal margin to compensate for service setup latency in corridor-based deployments.

\section{Conclusion}\label{sec:conclusion} 

This paper has investigated the critical conflict between EE and service reliability in urban UAV logistics networks. We have shown that conventional reactive sleeping strategies yield misleadingly high nominal efficiency, a metric inadvertently resulting from the systematic shedding of the energy-intensive traffic wavefront. To resolve this misalignment, we have transcended the static snapshot limitations of classical stochastic geometry by establishing a fluid-dynamic traffic model and proposing a flux-aware asymmetric activation strategy. By exploiting the intrinsic spatial phase-lead of the information flux, our control logic generates a proactive guard ring that effectively compensates for service setup latency. Numerical results have validated that this framework achieves a zero-latency response, reducing wavefront outages from approximately 20\% to near zero compared to reactive baselines. Under our derived QoS-penalized metric, the proposed strategy achieves Pareto optimality, demonstrating that integrating macroscopic kinematic states into network control loops offers a resilient pathway for aerial mobile computing systems.

\small

	\makeatletter
	\def\@IEEEBIOskipN{0.1\baselineskip}
	\makeatother
\begin{IEEEbiography}
	[{\includegraphics[width=1in,height=1.25in,clip,keepaspectratio]{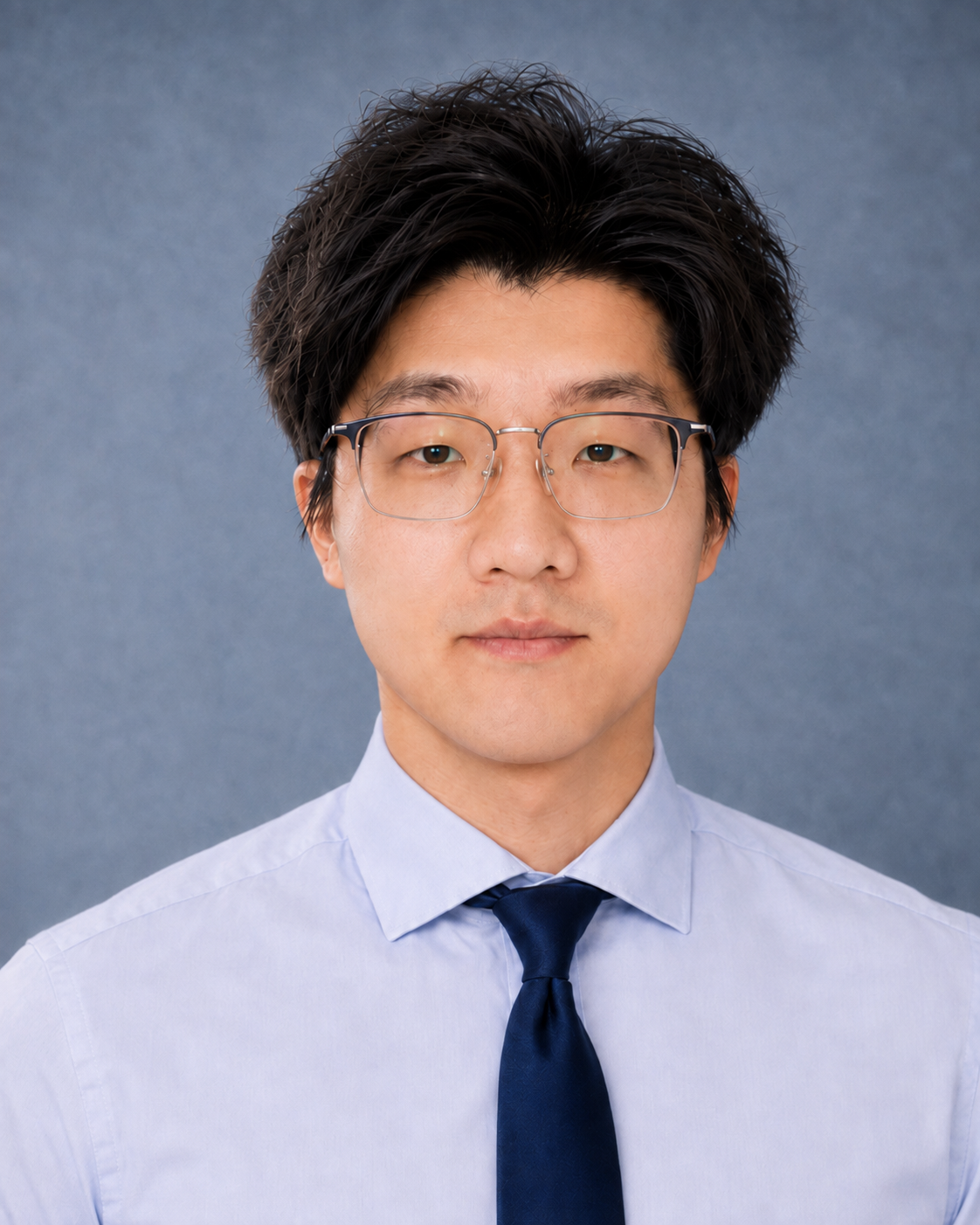}}]
	{Wen-Yu Dong}
received the B.S. degree in electronic and information engineering from Sichuan University, Chengdu, China, in 2019, and the Ph.D. degree in information and communication engineering from the School of Information and Communication Engineering, Beijing University of Posts and Telecommunications, Beijing, China, in 2025. He is currently a researcher with the China Telecom Research Institute, Beijing, China. His current research interests include space-air-ground integrated networks, information theory, stochastic geometry, and wireless communications.
\end{IEEEbiography}

\begin{IEEEbiography}[{\includegraphics[width=1in,height=1.25in,clip,keepaspectratio]{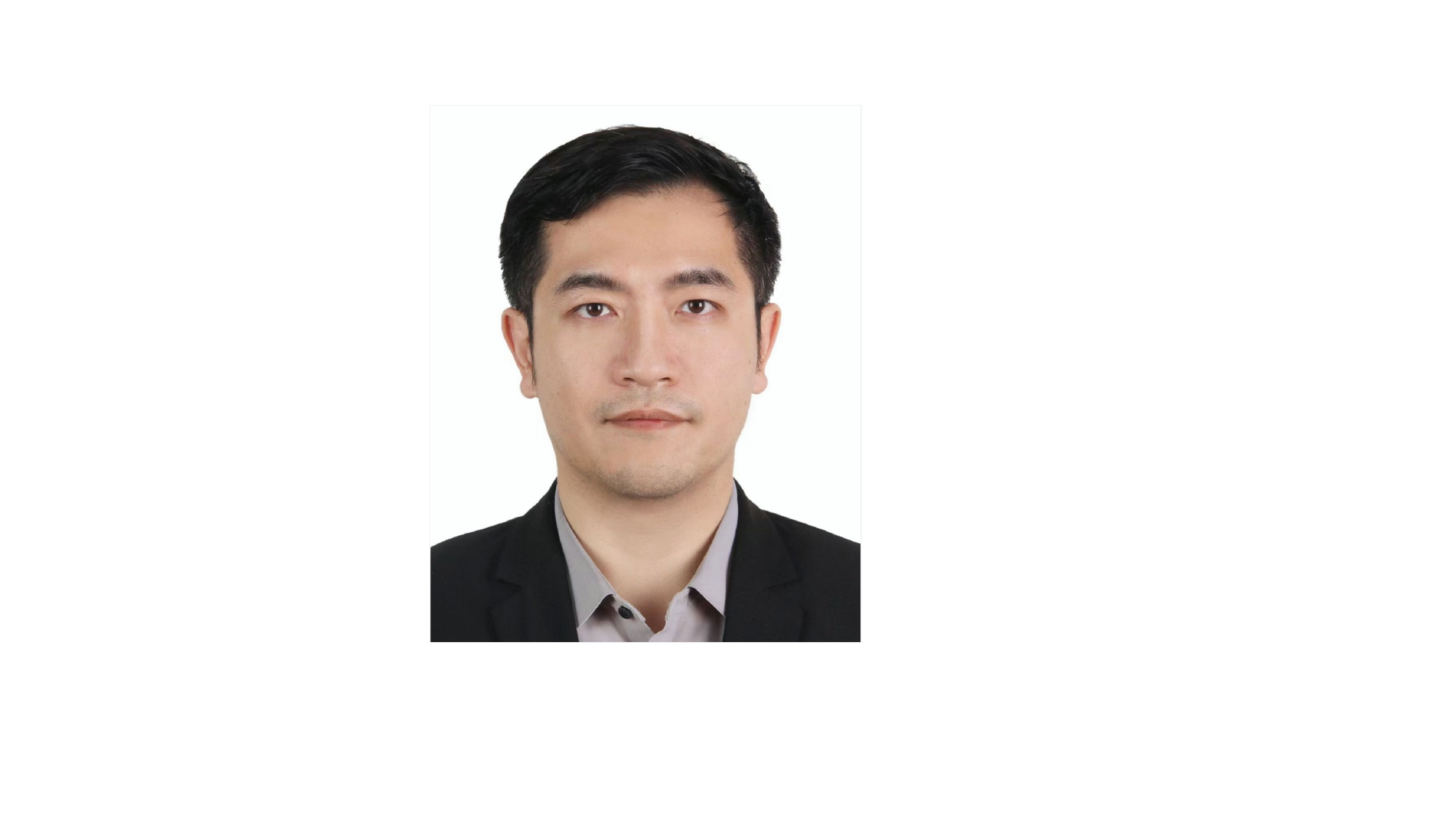}}]{Song Zhao}
 is a senior engineer with the Future Technology Center, China Telecom Research Institute, Beijing, China. He received his Ph.D. degree in Telecommunications and Information Systems from Beijing University of Posts and Telecommunications, Beijing, China. He serves as a delegate of China Telecom in the 3GPP SA2 and SA5 working groups, and as rapporteur for multiple 3GPP work items related to network automation and intelligence. His research interests include wireless channel modeling, radio resource management, proximity services, and network AI.
\end{IEEEbiography}

\begin{IEEEbiography}
	[{\includegraphics[width=1in,height=1.25in,clip,keepaspectratio]{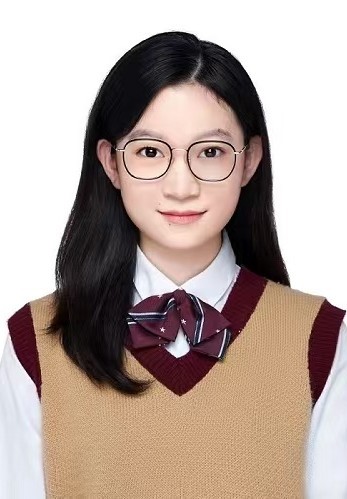}}]
	{Rui-Si Han}
	received her B.S. degree in E-Commerce and Law from Beijing University of Posts and Telecommunications, Beijing, China, in 2021, and her M.Eng. degree in Information and Communication Engineering from the same university in 2024. She is currently a researcher at the China Telecom Cloud Network Operating System R\&D Center, Beijing, China. Her research interests include mobile ad hoc networks and research project management.
\end{IEEEbiography}

\begin{IEEEbiography}[{\includegraphics[width=1in,height=1.25in,clip,keepaspectratio]{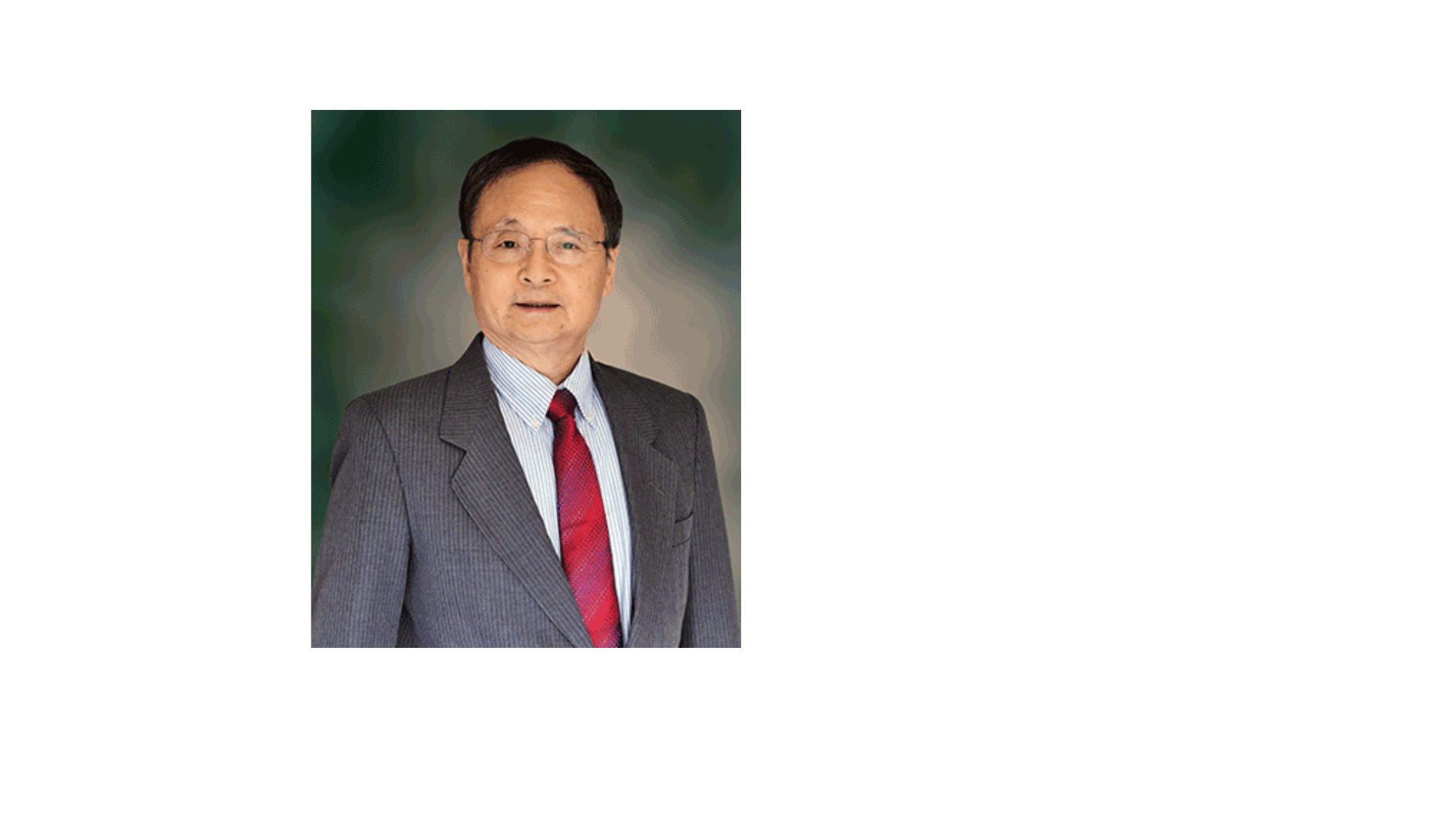}}]{Qi Bi}
	(Fellow, IEEE) is Chief Scientist of China Telecom and CTO of its Research Institute, specializing in 5G and 6G. He received his M.S. from Shanghai Jiao Tong University and Ph.D. from Pennsylvania State University. He was awarded Bell Labs Fellow in 2002, the Bell Labs President’s Gold Awards in 2000 and 2002, the Asian American Engineer of the Year in 2005, the Beijing Outstanding Contribution Award for Innovation and Entrepreneurship for Overseas Scholars in 2019, and three Patent Silver Prizes from the China National Intellectual Property Administration from 2022 to 2024.
\end{IEEEbiography}

\begin{IEEEbiography}[{\includegraphics[width=1in,height=1.25in,clip,keepaspectratio]{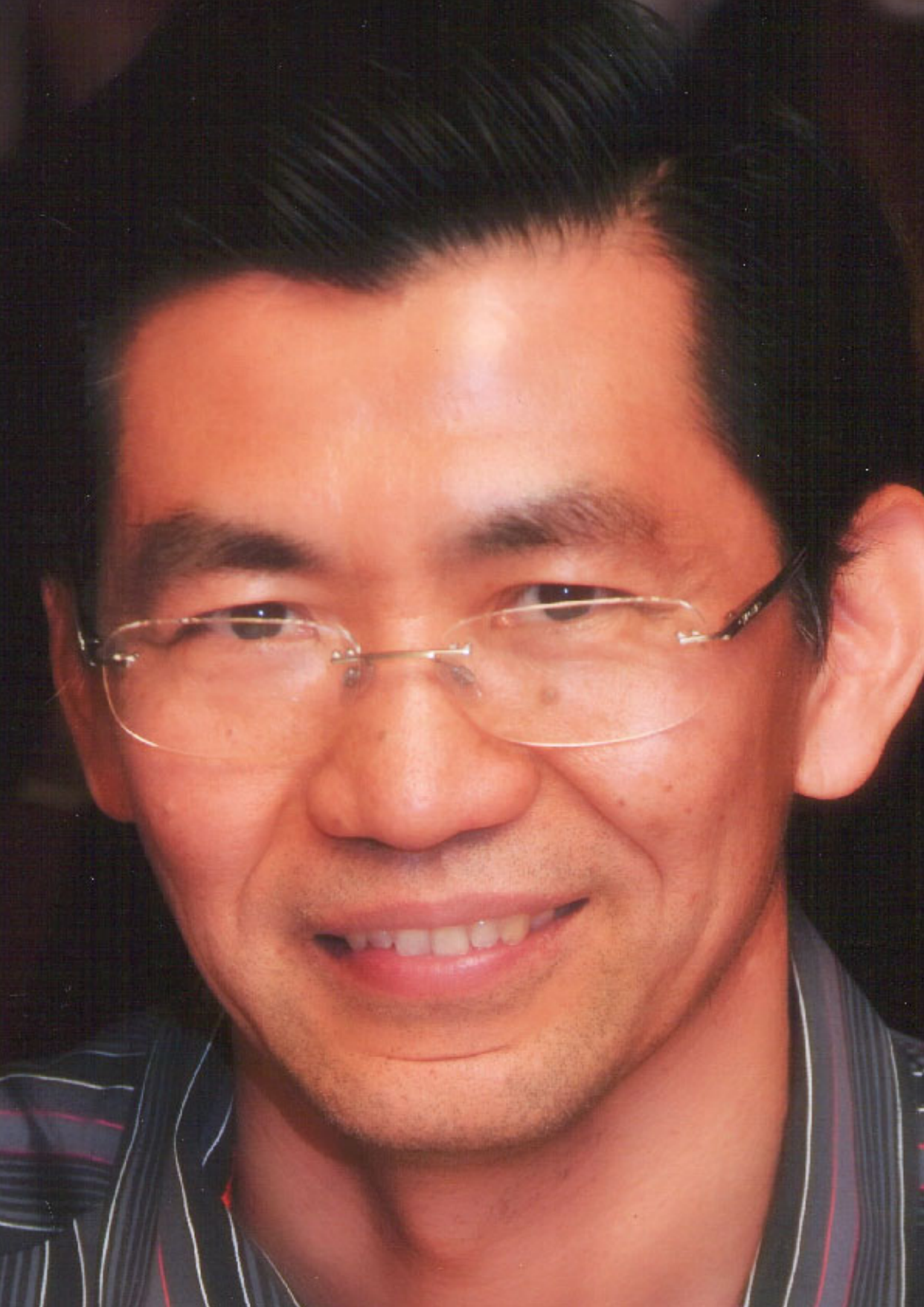}}]{Sheng Chen}
	(Life Fellow, IEEE) received his BEng degree from the East China Petroleum Institute, Dongying, China, in 1982, and his PhD degree from the City University, London, in 1986, both in control engineering. In 2005, he was awarded the higher doctoral degree, Doctor of Sciences (DSc), from the University of Southampton, Southampton, UK. From 1986 to 1999, He held research and academic appointments at the Universities of Sheffield, Edinburgh and Portsmouth, all in UK. Since 1999, he has been with the School of Electronics and Computer Science, the University of Southampton, UK, where he holds the post of Professor in Intelligent Systems and Signal Processing. Dr Chen's research interests include adaptive signal processing, wireless communications, neural network and machine learning. He has published over 700 research papers. Professor Chen has 22,000+ Web of Science citations with h-index 65 and 43,000+ Google Scholar citations with h-index 87. Dr. Chen is a Fellow of the United Kingdom Royal Academy of Engineering, a Fellow of Asia-Pacific Artificial Intelligence Association and a Fellow of IET. He is one of the original 200 ISI highly cited researchers in engineering (March 2004). 
Professor Chen is IEEE ComSoc Signal Processing and Computing for Communications (SPCC) 2025 Technical Recognition Award recipient. 	
\end{IEEEbiography}

\end{document}


\title{\LARGE Supplementary Material for 
	``Digital Tides: A Fluid-Dynamic Framework for Flux-Aware Infrastructure Provisioning in UAV Logistics Networks''}

	\IEEEaftertitletext{\vspace{-8mm}}
	\maketitle

\setcounter{equation}{0}
\setcounter{figure}{0}
\setcounter{table}{0}

\section{Derivation of Macroscopic Velocity Field}\label{app:velocity_derivation} 

The radial velocity $v_r(r,t)$ is determined by the continuity equation (4) in polar coordinates:
\begin{equation} 
	\frac{\partial \lambda}{\partial t} + \frac{1}{r} \frac{\partial}{\partial r} (r \lambda v_r) = 0.
\end{equation}
Integrating from $0$ to $r$ with boundary condition $v_r(0,t)=0$, we obtain:
\begin{equation} 
	r \lambda(r,t) v_r(r,t) = - \int_0^r x \frac{\partial \lambda(x,t)}{\partial t} \mathrm{d}x.
	\label{eq:app_integral_def}
\end{equation}
For the Gaussian field $\lambda(x,t) = \frac{N(t)}{2\pi \sigma^2(t)} e^{-\frac{x^2}{2\sigma^2(t)}}$, we first determine its temporal derivative. Differentiating $\ln \lambda$ with respect to time $t$, and denoting the temporal derivative by the dot notation, i.e., $\dot{x} \triangleq \partial x / \partial t$, we obtain:
\begin{equation}\label{eq:TempDer} 
	\frac{1}{\lambda} \!\frac{\partial \lambda}{\partial t} \!= \!\frac{\partial}{\partial t} \! \left( \ln N \! -\!  2\ln \sigma\! - \!\frac{x^2}{2\sigma^2} \!\!\right)\! =\! \frac{\dot{N}}{N} + \frac{\dot{\sigma}}{\sigma} \left( \frac{x^2}{\sigma^2} - 2 \right).
\end{equation}
Substituting (\ref{eq:TempDer}) into \eqref{eq:app_integral_def} leads to:
\begin{equation} 
	r \lambda v_r = - \frac{\dot{N}}{N} \underbrace{\int_0^r x \lambda \mathrm{d}x}_{\mathcal{I}_{\textrm{load}}} - \frac{\dot{\sigma}}{\sigma} \underbrace{\int_0^r x \lambda \left( \frac{x^2}{\sigma^2} - 2 \right) \mathrm{d}x}_{\mathcal{I}_{\textrm{exp}}} ,
	\label{eq:app_split}
\end{equation}
which is decomposed into two distinct physical components: a cumulative load term $\mathcal{I}_{\textrm{load}}$ and a spatial expansion term $\mathcal{I}_{\textrm{exp}}$.
To simplify the spatial expansion term $\mathcal{I}_{\textrm{exp}}$, we leverage the derivative property specific to the Gaussian kernel: $\partial_x \lambda = -\frac{x}{\sigma^2}\lambda$.
Applying integration by parts to the cubic term $\int \frac{x^3}{\sigma^2}\lambda \mathrm{d}x$ with $u=x^2$ and $\mathrm{d}v = -\mathrm{d}\lambda$, we obtain:
\begin{align} 
	\int_0^r \frac{x^3}{\sigma^2}\lambda \textrm{d}x &= \!\! \int_0^r \!x^2 \!\!\left(\! -\frac{\partial \lambda}{\partial x} \right) \!\textrm{d}x 
	\!= \!-r^2 \lambda(r,t)\! +\! 2 \mathcal{I}_{\textrm{load}}.
\end{align}
Substituting this result back into the definition of $\mathcal{I}_{\textrm{exp}}$ in \eqref{eq:app_split}:
\begin{equation} 
	\mathcal{I}_{\textrm{exp}} = \left( -r^2 \lambda + 2\mathcal{I}_{\textrm{load}} \right) - 2\mathcal{I}_{\textrm{load}} = -r^2 \lambda(r,t).
\end{equation}
The integral terms $\mathcal{I}_{\textrm{load}}$ cancel perfectly, isolating the boundary term. The velocity equation simplifies to:
\begin{equation} 
	r \lambda(r,t) v_r(r,t) = - \frac{\dot{N}(t)}{N(t)} \mathcal{I}_{\textrm{load}} + \frac{\dot{\sigma}(t)}{\sigma(t)} r^2 \lambda(r,t).
\end{equation}
Note that the cumulative load integral evaluates to $\mathcal{I}_{\textrm{load}} = \frac{N(t)}{2\pi}\left( 1 - \exp\left(-\frac{r^2}{2\sigma^2(t)}\right) \right)$. Dividing both sides by $r \lambda(r,t)$, we obtain the exact macroscopic velocity:
\begin{equation} 
	v_r(r,t) = r \frac{\dot{\sigma}(t)}{\sigma(t)} - \frac{\dot{N}(t)}{N(t)} \frac{\sigma^2(t)}{r} \left( \exp\left(\frac{r^2}{2\sigma^2(t)}\right) - 1 \right).
\end{equation}
In the transport-dominated regime, where the source-variation term remains sufficiently small over the spatial region of interest, the velocity field can be effectively approximated by its leading transport component:
\begin{equation} 
	v_r(r,t) \approx r \frac{\dot{\sigma}(t)}{\sigma(t)}.
\end{equation}
This completes the derivation.\qed

\section{Proof of Proposition 1}\label{supp:proof_activation_radius}

\begin{proof}
	The effective activation radius $R_{\textrm{active}}(t)$ is determined by the phase-dependent control logic defined in (20). We explicitly derive the component radii for the density-based and flux-based triggers, respectively.
	
	\textit{1) Static Density Boundaries:}
	Consider a generic density threshold $\Lambda \in \{\lambda_{\textrm{act}}, \lambda_{\textrm{hold}}\}$. The activation boundary is the spatial locus satisfying $\lambda(r,t) = \Lambda$. Inverting the Gaussian intensity function (1) with respect to $r$ yields:
	\begin{equation} 
		\frac{N(t)}{2\pi \sigma^2(t)} \exp\left( - \frac{r^2}{2\sigma^2(t)} \right) = \Lambda.
	\end{equation}
	Rearranging the terms and taking the natural logarithm, we obtain the squared radius:
	\begin{equation} 
		r^2 = 2\sigma^2(t) \ln \left( \frac{N(t)}{2\pi \sigma^2(t) \Lambda} \right).
	\end{equation}
	Imposing the physical constraint that the radius must be non-negative, i.e., existing only if the peak density exceeds $\Lambda$, we derive the general closed-form expression:
	\begin{equation} 
		R_{\textrm{den}}(t; \Lambda) = \sigma(t) \sqrt{2 \left[ \ln \left( \frac{N(t)}{2\pi \sigma^2(t) \Lambda} \right) \right]^+ }.
	\end{equation}
	Substituting $\Lambda = \lambda_{\textrm{act}}$ and $\Lambda = \lambda_{\textrm{hold}}$ yields the expressions for $R_{\textrm{act}}(t)$ in (22a) and $R_{\textrm{hold}}(t)$ in (22b), respectively. Since the logarithmic function is strictly monotonic and $\lambda_{\textrm{hold}} < \lambda_{\textrm{act}}$, it strictly follows that $R_{\textrm{hold}}(t) \ge R_{\textrm{act}}(t)$.
	
	\textit{2) Flux Trigger Radius:}
	The flux boundary satisfies the condition $\|\mathbf{\Phi}(r,t)\|\! =\! \delta_{\textrm{th}}$. Substituting the macroscopic velocity $v_r\! \approx\! r |\dot{\sigma}|/\sigma$ into the flux definition $\mathbf{\Phi}\! =\! \lambda v_r \mathbf{e}_r$, we obtain:
	\begin{equation} 
		\left( \frac{N(t)}{2\pi \sigma^2(t)} \exp\left(-\frac{r^2}{2\sigma^2(t)}\right) \right) \cdot \left( r \frac{|\dot{\sigma}(t)|}{\sigma(t)} \right) = \delta_{\textrm{th}}.
	\end{equation}
	Grouping the spatially dependent terms on the left-hand side, we define the spatial profile function $g(r)$:
	\begin{equation} 
		g(r)\! \triangleq\! \frac{r}{\sigma^3(t)} \exp\left( - \frac{r^2}{2\sigma^2(t)} \right)\! =\! \frac{2\pi \delta_{\textrm{th}}}{N(t) |\dot{\sigma}(t)|}\! \triangleq\! \mathcal{K}_{\textrm{th}}(t).
	\end{equation}
	Analysis of the first derivative $g'(r)$ reveals that $g(r)$ is a unimodal function peaking at $r=\sigma(t)$. Consequently, for any feasible threshold $\mathcal{K}_{\textrm{th}}(t) < g(\sigma(t))$, the equation possesses two real roots. To ensure the activation zone fully encompasses the expanding swarm, $R_{\textrm{flux}}(t)$ is defined as the supremum (outermost) root, as formulated in (22c).
	
	\textit{3) Phase-Dependent Synthesis:}
	Finally, applying the asymmetric control logic: during the expansion phase where $\dot{\sigma} > 0$, the network activates if either trigger is met, implying $R_{\textrm{active}} = \max \{ R_{\textrm{act}}, R_{\textrm{flux}} \}$. During the contraction phase where $\dot{\sigma} \le 0$, the flux trigger is disabled, and the boundary is solely governed by the hysteresis condition, yielding $R_{\textrm{active}} = R_{\textrm{hold}}$. 
	
	This completes the proof.
\end{proof}

\section{Proof of Theorem 1}\label{app:proof_theorem_cov} 

\begin{proof}
	The instantaneous CP is defined as $\mathsf{P}_{\mathrm{cov}}(r,t) = \mathbb{P}(\textrm{SINR} > \gamma \mid r,t)$. Given the activation radius $R_{\textrm{active}}(t)$ derived in Proposition 1, we analyze the coverage in two distinct spatial regions.
	
	\subsubsection{Outage Zone ($r > R_{\mathrm{active}}(t)$)}
	
	In this region, the density of active LAS modules is zero. Consequently, no serving BS is available, yielding $\mathsf{P}_{\mathrm{cov}}(r,t) = 0$.
	
	\subsubsection{Active Zone ($r \le R_{\mathrm{active}}(t)$)}
	
	Conditioned on the nearest serving BS distance $y$, we focus on the interference-limited regime, i.e., $I_{\textrm{agg}} \gg P_{\mathrm{N}}$. Under this assumption, the noise term is negligible, and the CP is given by:
	\begin{align} 
		\mathbb{P}(\textrm{SINR} > \gamma \mid y) &\approx \mathbb{E}_{I_{\textrm{agg}}}\left[ \exp\left( - \gamma y^\nu P_{\textrm{LAS}}^{-1} I_{\textrm{agg}} \right) \right] \notag \\
		&= \mathcal{L}_{I_{\textrm{agg}}}(s)\big|_{s = \gamma y^\nu P_{\textrm{LAS}}^{-1}}.
		\label{eq:laplace_def}
	\end{align}
	Utilizing the probability generating functional of the PPP [22] truncated within the annulus $[y, R_{\textrm{active}}]$, we arrive:
	\begin{align} 
		\mathcal{L}_{I_{\textrm{agg}}}(s) &= \exp\left( -2\pi\lambda_{\textrm{BS}} \int_{y}^{R_{\textrm{active}}} \frac{s v}{v^\nu + s} \, \mathrm{d}v \right).
		\label{eq:integral_setup}
	\end{align}
	Let $\kappa\triangleq 2/\nu$. We invoke the standard integral identity involving the Gaussian hypergeometric function $_2F_1(\cdot)$:
	\begin{equation} 
		\int \frac{v}{v^\nu + s} \, \mathrm{d}v = \frac{v^2}{2s} \, {}_2F_1\left(1, \kappa; 1+\kappa; -v^\nu s^{-1}\right).
	\end{equation}
	Applying the integration limits $y$ and $R_{\textrm{active}}$ to \eqref{eq:integral_setup}, and substituting $s = \gamma y^\nu$, the integral term resolves to a closed-form interference exponent $\Psi(y, R_{\textrm{active}})$. Note that the finite integral $\int_y^R$ can be viewed as the difference between the infinite integral $\int_y^\infty$ and the tail integral $\int_R^\infty$. Thus:
	\begin{align} 
		\Psi &= \pi\lambda_{\textrm{BS}} s \Bigg( \frac{y^2}{s} {}_2F_1\left(1, \kappa; 1+\kappa; -\frac{y^\nu}{s}\right) \notag \\
		&\quad - \frac{R_{\textrm{active}}^2}{s} {}_2F_1\left(1, \kappa; 1+\kappa; -\frac{R_{\textrm{active}}^\nu}{s}\right) \Bigg) \notag \\
		&= \pi\lambda_{\textrm{BS}} \left( y^2 \mathcal{H}\left( -\frac{1}{\gamma} \right) - R_{\textrm{active}}^2 \mathcal{H}\left( -\frac{1}{\gamma} \eta^{-\nu} \right) \right),
		\label{eq:psi_derived}
	\end{align}
	where $\eta \triangleq y / R_{\textrm{active}}$ denotes the normalized radial distance, and $\mathcal{H}(z) \triangleq {}_2F_1(1, \kappa; 1+\kappa; z)$.
	
	Substituting \eqref{eq:psi_derived} back into the Laplace transform $\mathcal{L} = \exp(-\Psi)$ yields:
	\begin{equation} 
		\mathcal{L}_{I_{\textrm{agg}}}\! (\gamma y^\nu)\! =\! \exp\! \left( -\pi\lambda_{\textrm{BS}} y^2\!  \left( \mathcal{H}(-\gamma^{-1})\! - \eta^{-2} \mathcal{H}(-\gamma^{-1}\eta^{-\nu}) \right) \right)\! .
	\end{equation}
	De-conditioning over the PDF of the nearest neighbor distance $f_Y(y) = 2\pi\lambda_{\textrm{BS}} y e^{-\pi\lambda_{\textrm{BS}} y^2}$ for $y \in [0, R_{\textrm{active}}]$, we obtain:
	\begin{equation} 
		\mathsf{P}_{\mathrm{cov}}(r,t) = \int_{0}^{R_{\textrm{active}}} e^{-\Psi(y, R_{\textrm{active}})} f_Y(y) \, \mathrm{d}y.
	\end{equation}
	This completes the proof.
\end{proof}

\section{Proof of Theorem 2}\label{app:proof_theorem_ee} 

\begin{proof}
	For the network EE $\eta_{\mathrm{EE}} = \mathcal{T}_{\mathrm{total}} / E_{\mathrm{total}}$. We derive its numerator and denominator separately.
	
	\subsubsection{Total Served Traffic $\mathcal{T}_{\mathrm{total}}$}
	
	Substituting the instantaneous throughput density $\mathcal{R}(\mathbf{x}, t) = B\, \eta_{\rm SE} \lambda(\mathbf{x},t) \mathsf{P}_{\mathrm{cov}}(\mathbf{x},t)$ into the definition of total traffic, we integrate over the service area $\mathcal{A}$ and period $T_{\mathrm{period}}$:
	\begin{equation} 
		\mathcal{T}_{\mathrm{total}} = \int_0^{T_{\mathrm{period}}} \int_{\mathcal{A}} B\, \eta_{\rm SE} \lambda(\mathbf{x},t) \mathsf{P}_{\mathrm{cov}}(\mathbf{x},t) \, \mathrm{d}\mathbf{x} \, \mathrm{d}t.
	\end{equation}
	The service availability is strictly determined by the activation policy. For $r > R_{\mathrm{active}}(t)$, the GBSs are in sleep mode, yielding $\mathsf{P}_{\mathrm{cov}} = 0$. For the active region $r \le R_{\mathrm{active}}(t)$, the local CP varies spatially due to boundary effects as derived in Theorem 1.
	
	To facilitate a tractable system-level metric, we adopt the \textit{mean value approximation} for the spatial domain. We define the spatially averaged CP at time $t$ as:
	\begin{equation} 
		\bar{\mathsf{P}}_{\mathrm{cov}}(t) \triangleq \frac{1}{\pi R_{\mathrm{active}}^2(t)} \int_{0}^{R_{\mathrm{active}}(t)} \mathsf{P}_{\mathrm{cov}}(r,t) 2\pi r \, \mathrm{d}r.
	\end{equation}
	By decoupling the spatial average from the agent density distribution, the instantaneous network traffic $\mathcal{T}_{\mathrm{inst}}(t)$ is approximated as:
	\begin{align} 
		\mathcal{T}_{\mathrm{inst}}(t) &\approx \int_0^{R_{\mathrm{active}}(t)} B\, \eta_{\rm SE} \bar{\mathsf{P}}_{\mathrm{cov}}(t) \lambda(r,t) 2\pi r \, \mathrm{d}r \nonumber \\
		&= B\, \eta_{\rm SE} \bar{\mathsf{P}}_{\mathrm{cov}}(t) \underbrace{\int_0^{R_{\mathrm{active}}(t)} \lambda(r,t) 2\pi r \, \mathrm{d}r}_{\mathcal{M}(t)}.
		\label{eq:traffic_approx}
	\end{align}
	This approximation holds because the CP $\mathsf{P}_{\mathrm{cov}}(r,t)$ remains nearly constant within the agent-dense core region (as verified in Fig. 3), allowing it to be effectively factored out from the agent density integral.
	
	Evaluating the inner Gaussian integral:
	\begin{align} 
		\mathcal{M}(t) &= \frac{N(t)}{\sigma^2(t)} \int_0^{R_{\mathrm{active}}(t)} e^{-\frac{r^2}{2\sigma^2(t)}} r \, \mathrm{d}r \nonumber \\
		&= N(t) \left( 1 - \exp\left( - \frac{R_{\mathrm{active}}^2(t)}{2\sigma^2(t)} \right) \right).
	\end{align}
	Substituting $\mathcal{M}(t)$ back, we obtain $\mathcal{T}_{\mathrm{inst}}(t) = B\, \eta_{\rm SE} \bar{\mathsf{P}}_{\mathrm{cov}}(t) \mathcal{M}(t)$. Integrating $\mathcal{T}_{\mathrm{inst}}(t)$ over the period $T_{\mathrm{period}}$ yields the numerator of (25).
	
	\subsubsection{Total Energy Consumption $E_{\mathrm{total}}$}
	
	The instantaneous power consumption depends on the binary activation state $\alpha(\mathbf{x},t)$. We decompose the service area $\mathcal{A}$ into the AR $\mathcal{A}_{\mathrm{on}}(t)$ (a disk of radius $R_{\mathrm{active}}(t)$) and the sleep region $\mathcal{A}_{\mathrm{off}}(t)$:
	\begin{align} 
		P(t) &= \int_{\mathcal{A}} \lambda_{\mathrm{BS}} \left( P_{\mathrm{act}} \alpha(\mathbf{x},t) + P_{\mathrm{slp}} (1 - \alpha(\mathbf{x},t)) \right) \, \mathrm{d}\mathbf{x} \nonumber \\
		&= \lambda_{\mathrm{BS}} \left( P_{\mathrm{act}} |\mathcal{A}_{\mathrm{on}}(t)| + P_{\mathrm{slp}} (|\mathcal{A}| - |\mathcal{A}_{\mathrm{on}}(t)|) \right),
	\end{align}
	where $|\mathcal{A}_{\mathrm{on}}(t)| = \pi R_{\mathrm{active}}^2(t)$. Rearranging the terms:
	\begin{equation} 
		P(t) = \pi R_{\mathrm{active}}^2(t) \lambda_{\mathrm{BS}} (P_{\mathrm{act}} - P_{\mathrm{slp}}) + P_{\mathrm{base}},
	\end{equation}
	which matches the definition in (28) with $P_{\mathrm{base}} = \lambda_{\mathrm{BS}} |\mathcal{A}| P_{\mathrm{slp}}$. Integrating $P(t)$ over time yields $E_{\mathrm{total}}$.
	
	Combining $\mathcal{T}_{\mathrm{total}}$ and $E_{\mathrm{total}}$ completes the proof.
\end{proof}